\definecolor{shadecolor}{rgb}{0.95, 0.95, 0.86}
\def \gg{{\mathfrak g}}
\def \wt{\widetilde}
\renewcommand{\d}{{\mathrm d}}
\newcommand{\im}{\mathrm{i}}
\numberwithin{equation}{section}
\newtheorem{theo}{Theorem}[section]
\newtheorem{example}[theo]{Example}
\newtheorem{exercise}[theo]{Exercise}
\newtheorem{lemma}[theo]{Lemma}
\newtheorem{remark}[theo]{Remark}
\newtheorem{problem}[theo]{Riemann-Hilbert Problem}
\newtheorem{proposition}[theo]{Proposition} 
\newtheorem{corollary}[theo]{Corollary} 
\newtheorem{definition}[theo]{Definition}
\def\le{\left}
\def\ri{\right}
\def\un{\underline}
\def\res{\mathop{\mathrm {res}}\limits}
\def\bt{\begin{theo}}
\def\et{\end{theo}}
\def\bc{\begin{corollary}}
\def\ec{\end{corollary}}
\def\bx{\begin{example}\small}
\def\ex{\end{example}}
\def\bxr{\begin{exercise}\small}
\def\exr{\end{exercise}}
\def\bl{\begin{lemma}}
\def\el{\end{lemma}}
\def\bd{\begin{definition}}
\def\ed{\end{definition}}
\def\bp{\begin{proposition}}
\def\ep{\end{proposition}}
\def\br{\begin{remark}}
\def\er{\end{remark}}
\def\be{\begin{equation}}
\def\ee{\end{equation}}
\def\ov {\overline}
\def\&{\hspace{-15pt}&}
\def\mod{\, \mathrm{mod}\,\,}
\def\bea{\begin{eqnarray}}
\def\eea{\end{eqnarray}}
\def\beas{\begin{eqnarray*}}
\def\eeas{\end{eqnarray*}}
\def \pa{\partial}
\def\C{{\mathbb C}}
\def\L{\mathcal L}
\def\R{{\mathbb R}}
\def\wh{\widehat}
\def\Z{{\mathbb Z}}
\def\l{\lambda}
\def\1{{\bf 1}}
\def\Roots{\mathfrak R}
\def\z{\zeta}
\def\eqref#1{(\ref{#1})}
\def\bt{{\mathbf t}}
\def\binom#1#2{\le({#1 \atop #2}\ri)}
\begin{document}

\title[Generalized Vorob'ev-Yablonski Polynomials]{Hankel Determinant Approach to Generalized Vorob'ev-Yablonski Polynomials and their Roots}

\author{Ferenc Balogh}
\address{Centre de recherches math\'ematiques,
Universit\'e de Montr\'eal, C.~P.~6128, succ. centre ville, Montr\'eal,
Qu\'ebec, Canada H3C 3J7 and,
Department of Mathematics and
Statistics, Concordia University, 1455 de Maisonneuve W., Montr\'eal, Qu\'ebec,
Canada H3G 1M8}
\email{Ferenc.Balogh@concordia.ca}

\author{Marco Bertola}
\address{Centre de recherches math\'ematiques,
Universit\'e de Montr\'eal, C.~P.~6128, succ. centre ville, Montr\'eal,
Qu\'ebec, Canada H3C 3J7 \and
Department of Mathematics \and
Statistics, Concordia University, 1455 de Maisonneuve W., Montr\'eal, Qu\'ebec,
Canada H3G 1M8 \and, 
Sector of Mathematical Physics, 
SISSA/ISAS 
via Bonomea, 265  }
\email{Marco.Bertola@concordia.ca, mbertola@sissa.it}

\author{Thomas Bothner}
\address{Centre de recherches math\'ematiques,
Universit\'e de Montr\'eal, C.~P.~6128, succ. centre ville, Montr\'eal,
Qu\'ebec, Canada H3C 3J7 and,
Department of Mathematics and
Statistics, Concordia University, 1455 de Maisonneuve W., Montr\'eal, Qu\'ebec,
Canada H3G 1M8}
\email{bothner@crm.umontreal.ca}

\keywords{Vorob'ev-Yablonski polynomials, Hankel determinant representation, KdV and Painlev\'e II hierarchy, Schur functions.}

\subjclass[2010]{Primary 34M55; Secondary 35Q53, 34M50}

\date{\today}

\begin{abstract}
Generalized Vorob'ev-Yablonski polynomials have been introduced by Clarkson and Mansfield in their study of rational solutions of the second Painlev\'e hierarchy. We present new Hankel determinant identities for the squares of these special polynomials  in terms of Schur polynomials. As an application of the identities, we analyze the roots of generalized Vorob'ev-Yablonski polynomials and provide formul\ae\, for the boundary curves of the highly regular patterns observed numerically in \cite{CM}.
\end{abstract}
\maketitle

\section{Introduction and statement of results}
Let $u=u(x;\alpha)$ denote a solution of the second Painlev\'e equation
\begin{equation}\label{i:1}
	u_{xx}=xu+2u^3+\alpha,\ \ \ x\in\mathbb{C}.
\end{equation}
It is known that for special values of the parameter $\alpha\in\mathbb{C}$ the equation admits  rational solutions.  In fact Vorob'ev and Yablonski \cite{V,Ya} showed that for $\alpha=n\in\mathbb{Z}$, the equation \eqref{i:1} has a unique rational solution of the form
\begin{equation}\label{i:2}
	u(x;n) = \frac{\d}{\d x}\ln\left\{\frac{\mathcal{Q}_{n-1}(x)}{\mathcal{Q}_n(x)}\right\},\ \ n\in\mathbb{Z}_{\geq 1};\ \ \ \ \ \ \ u(x;0)=0,\ \ \ \ \ \ \ u(x;-n)=-u(x;n),\ \ n\in\mathbb{Z}_{\geq 1},
\end{equation}
which is constructed in terms of the Vorob'ev-Yablonski polynomials $\{\mathcal{Q}_n(x)\}_{n\geq 0}$. These special polynomials can be defined via a differential-difference equation
\begin{equation}\label{i:0}
	\mathcal{Q}_{n+1}(x)\mathcal{Q}_{n-1}(x)=x\mathcal{Q}_n^2(x)-4\left(\mathcal{Q}_n''(x)\mathcal{Q}_n(x)-\big(\mathcal{Q}_n'(x)\big)^2\right),\ \ n\in\mathbb{Z}_{\geq 1},\ \ x\in\mathbb{C},
\end{equation}
where $\mathcal{Q}_0(x)=1,\mathcal{Q}_1(x)=x$, or equivalently \cite{KO} in determinantal form: with $q_k(x)=0$ for $k<0$,
\begin{equation}\label{i:3}
	\mathcal{Q}_n(x) = \prod_{k=1}^n\frac{(2k)!}{2^kk!}  \det\Big[q_{n-2\ell+j}(x)\Big]_{\ell,j=0}^{n-1},\ \ n\in\mathbb{Z}_{\geq 1};\ \ \ \ \ \sum_{k=0}^{\infty}q_k(x)w^k=\exp\left[-\frac{4}{3}w^3+wx\right].
\end{equation}
For our purposes, it will prove useful to rewrite \eqref{i:3} in terms of Schur polynomials. In general (cf. \cite{M}), the Schur polynomial $s_{\lambda}\in\mathbb{C}[\mathbf{t}]$ in the variable $\mathbf{t}=(t_1,t_2,t_3,\ldots), t_j\in\mathbb{C}$ associated to the partition $\lambda=(\lambda_1,\lambda_2,\ldots,\lambda_{\ell(\lambda)})$ with $\mathbb{Z}\ni\lambda_j\geq\lambda_{j+1}>0$ is determined by the Jacobi-Trudi determinant,
\be
\label{JTrudi}
s_{\lambda}(\bt) = \det\big[h_{\lambda_j-j+k}(\bt)\big]_{j,k=1}^{\ell(\lambda)}\ .
\ee
Here, $h_k(\mathbf{t})$ for $k\in\mathbb{Z}_{\geq 0}$ is defined by the generating series
\be
\label{hdef}
\sum_{k=0}^{\infty}h_k(\bt)z^{k}=\exp\left(\sum_{j=1}^{\infty}t_j z^j\right); \hspace{0.5cm} \ \textnormal{and}\ \ \ \ h_k(\bt )=0,\ \ k<0.
\ee
\br
[Homogeneity]
From \eqref{hdef} it follows immediately that $h_k(\bt)$ is a weighted-homogeneous function, 
\be\nonumber
h_k(\bt) = 
\epsilon^k h_k \le(\epsilon^{-1} t_1, \epsilon^{-2} t_2, \epsilon^{-3} t_3,\dots \ri),\ \ \ \epsilon \in \C\backslash\{0\},
\ee
and hence also 
\be
\label{homogSchur}
s_\l (\bt) = \epsilon^{|\l|} s_\l \le(\epsilon^{-1} t_1, \epsilon^{-2} t_2, \epsilon^{-3} t_3,\ldots \ri),\ \ \ \ |\lambda|=\sum_{j=1}^{\ell(\lambda)}\lambda_j.
\ee
\er
For the special choice of a staircase partition, 
\begin{equation}\label{stair}
	\lambda\equiv\delta_n=(n,n-1,n-2,\ldots,2,1);\ \ \ \ \ell(\delta_n)=n,
\end{equation}
the identities \eqref{JTrudi},\eqref{hdef} and \eqref{i:3} lead to the representation of $Q_n(x)$ in terms of Schur polynomials,
\begin{equation*}
	\mathcal{Q}_n(x)=\prod_{k=1}^n\frac{(2k)!}{2^kk!}\,s_{\delta_n}\left(x,0,-\frac{4}{3},0,0,\ldots\right),\ \ x\in\mathbb{C},\ \ n\in\mathbb{Z}_{\geq 1}.
\end{equation*}
It is well known that equation \eqref{i:1} admits higher order generalizations and itself forms the first member of a full hierarchy. To be more precise, let $\mathcal{L}_N$ denote the following quantities expressed in terms of the  Lenard recursion operator,
\begin{equation}
\label{Lenardrec}
	\frac{\d}{\d x} \mathcal{L}_{N+1}[u]=\left(\frac{\d^3}{\d x^3}+4u\frac{\d}{\d x}+2u_x\right)\mathcal{L}_N[u],\ \ N\in\mathbb{Z}_{\geq 0};\ \ \ \mathcal{L}_0[u]=\frac{1}{2} ,
\end{equation}
and with the integration constant determined uniquely by the requirement $\L_n[0]=0,\ n\geq 1$.
The recursion gives, for instance, 
\begin{equation*}
	\mathcal{L}_1[u]=u,\ \ \ \ \mathcal{L}_2[u]=u_{xx}+3u^2,\ \ \ \ \mathcal{L}_3[u]=u_{xxxx}+5(u_x)^2+10uu_{xx}+10u^3.
\end{equation*}
The $N$-th member of the Painlev\'e II hierarchy is subsequently defined as the ordinary differential equation
\begin{equation}\label{i:5}
	\left(\frac{\d}{\d x}+2u\right)\mathcal{L}_N\big[u_x-u^2\big]=xu+\alpha_N,\ \ \ x\in\mathbb{C},\ \ \alpha_N\in\mathbb{C};\ \ \ \ u=u(x;\alpha_N,N).
\end{equation}
Hence, the first member $N=1$ is Painlev\'e II \eqref{i:1} itself, and more generally, the $N$-th member is an ordinary differential equation  of order  $2N$. 
Besides \eqref{i:5}, we shall also consider a case which involves additional complex parameters $t_3,t_5,\ldots,t_{2N-1}$. With $u=u(x;\alpha_N,\un t,N)$ for $x,\alpha_N\in\mathbb{C}$ and $\un t=(t_3,\ldots,t_{2N-1})\in\mathbb{C}^{N-1}$,
\be
\label{genPIIhier}
\left(\frac{\d}{\d x}+2u\right)\mathcal{L}_N\big[u_x-u^2\big]= \sum_{k=1}^{N-1} (2k+1) t_{2k+1}\le( \frac{\d}{\d x} +2u\ri) \L_k\big[u_x-u^2\big] + xu+ \alpha_N.
\ee
For \eqref{i:5} and \eqref{genPIIhier}, it is known \cite{G,DK2} that rational solutions exist if and only if $\alpha_N=n\in\mathbb{Z}$. Moreover, Clarkson and Mansfield in \cite{CM} introduced generalizations of the Vorob'ev-Yablonski polynomials for $N=2,3$ which allow to compute the rational solutions of \eqref{i:5} once more in terms of logarithmic derivatives,
\begin{equation*}
	u(x;n,N)=\frac{\d}{\d x}\ln\left\{\frac{\mathcal{Q}_{n-1}^{(N)}(x)}{\mathcal{Q}_n^{[N]}(x)}\right\},\ n\in\mathbb{Z}_{\geq 1};\hspace{0.5cm}u(x;0,N)=0,\ \ \ u(x;-n,N)=-u(x;n,N),\ \ n\in\mathbb{Z}_{\geq 1}.
\end{equation*}
This approach has been extended to \eqref{genPIIhier} for general $N\in\mathbb{Z}_{\geq 1}$ by Demina and Kudryashov \cite{DK1, DK2} who found in particular the analogues of \eqref{i:0} for, what we shall call {\it generalized Vorob'ev-Yablonski polynomials} $\mathcal{Q}_n^{[N]}(x;\un t)$,
\begin{eqnarray}
	\mathcal{Q}_{n+1}^{[N]}(x;\un t)\mathcal{Q}_{n-1}^{[N]}(x;\un t)&=&\big(\mathcal{Q}_n^{[N]}(x;\un t)\big)^2\Bigg\{x-2\mathcal{L}_N\left[2\frac{\d^2}{\d x^2}\ln\mathcal{Q}_n^{[N]}(x;\un t)\right]\label{diffrel}\\
	&&\hspace{0.5cm}+2\sum_{k=1}^{N-1}(2k+1)t_{2k+1}\mathcal{L}_k\left[2\frac{\d^2}{\d x^2}\ln\mathcal{Q}_n^{[N]}(x;\un t)\right]\Bigg\},\ \ n\in\mathbb{Z}_{\geq 1}\nonumber
\end{eqnarray}
with $\mathcal{Q}_0^{[N]}(x;\un t)=1$ and $\mathcal{Q}_1^{[N]}(x;\un t)=x$. For fixed $\un t=(t_3,t_5,\ldots,t_{2N-1})\in\mathbb{C}^{N-1}$ and $n,N\in\mathbb{Z}_{\geq 1}$ these special polynomials are then used in the construction of the unique rational solutions of \eqref{genPIIhier},
\begin{equation*}
	u(x;n,\un t,N)=\frac{\d}{\d x}\ln\left\{\frac{\mathcal{Q}_{n-1}^{[N]}(x;\un t)}{\mathcal{Q}_n^{[N]}(x;\un t)}\right\};\hspace{0.5cm}u(x;0,\un t,N)=0,\ \ \ u(x;-n,\un t,N)=-u(x;n,\un t,N).
\end{equation*}
\subsection{Determinantal identities}
It is mentioned in \cite{DK1}, but not proven, that also $\mathcal{Q}_n^{[N]}(x;\un t)$ can be expressed as a Schur polynomial. In our first Theorem below we shall close this small gap.

\begin{figure}
\includegraphics[width=0.329\textwidth]{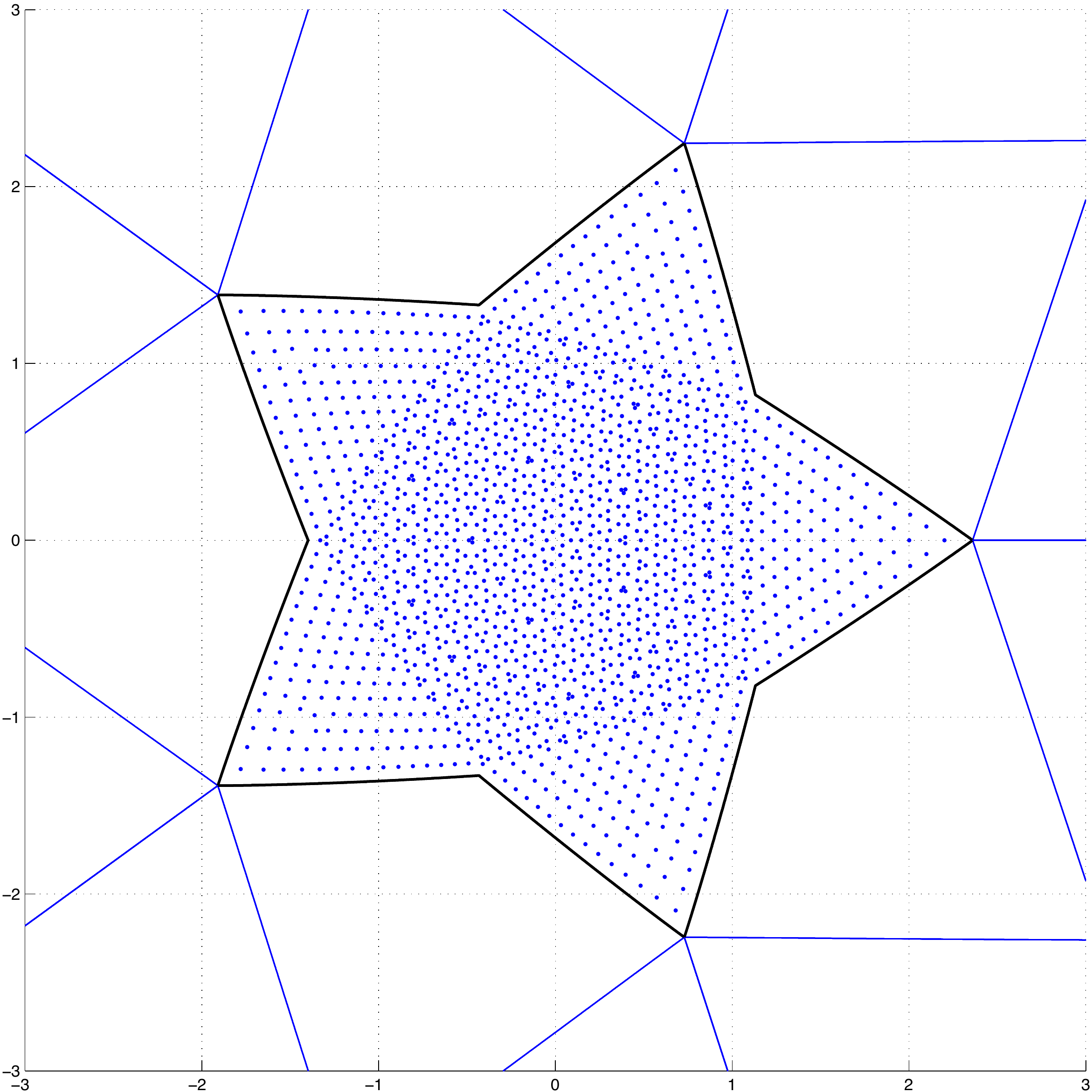}
\includegraphics[width=0.329\textwidth]{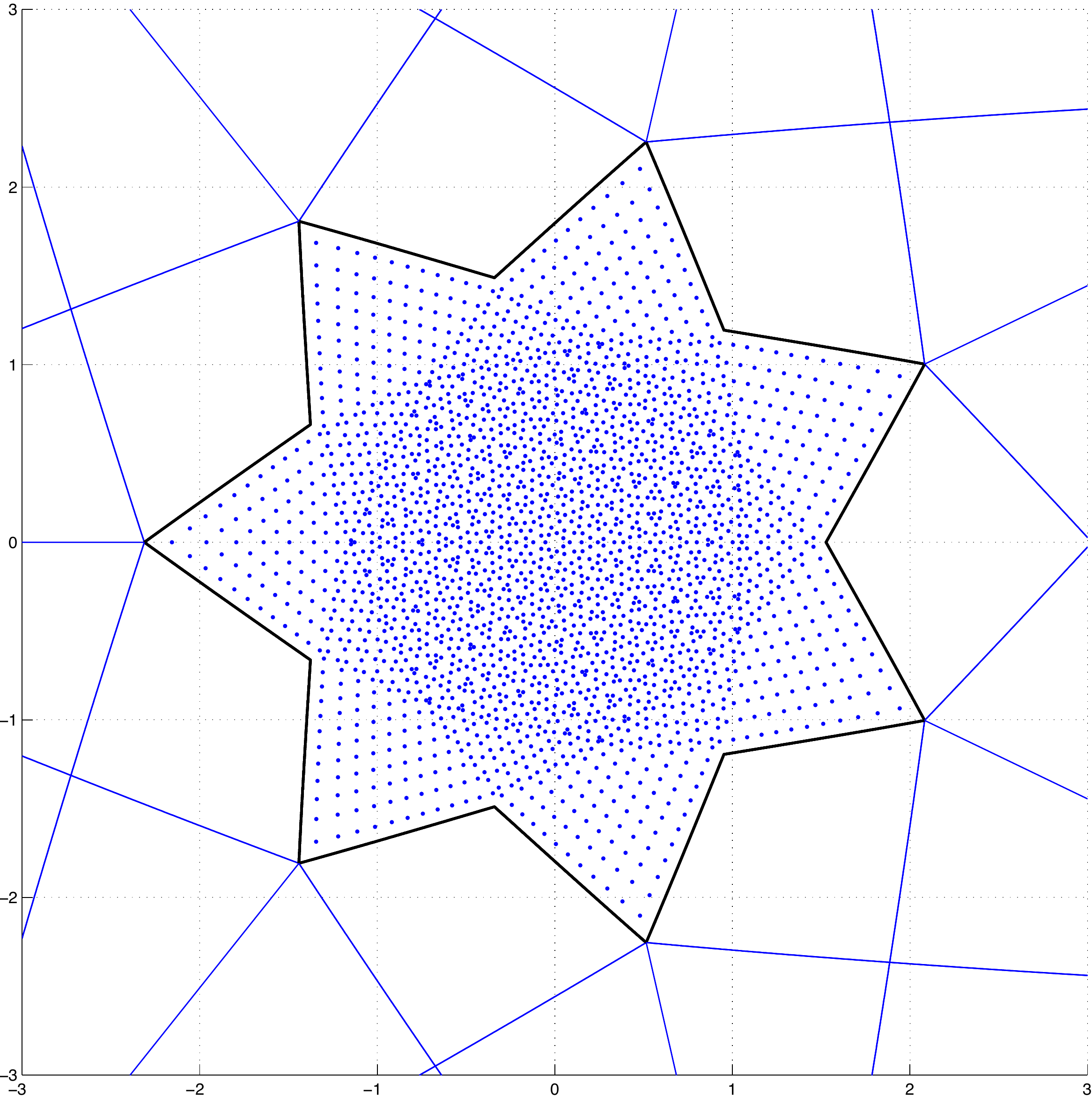}
\includegraphics[width=0.329\textwidth]{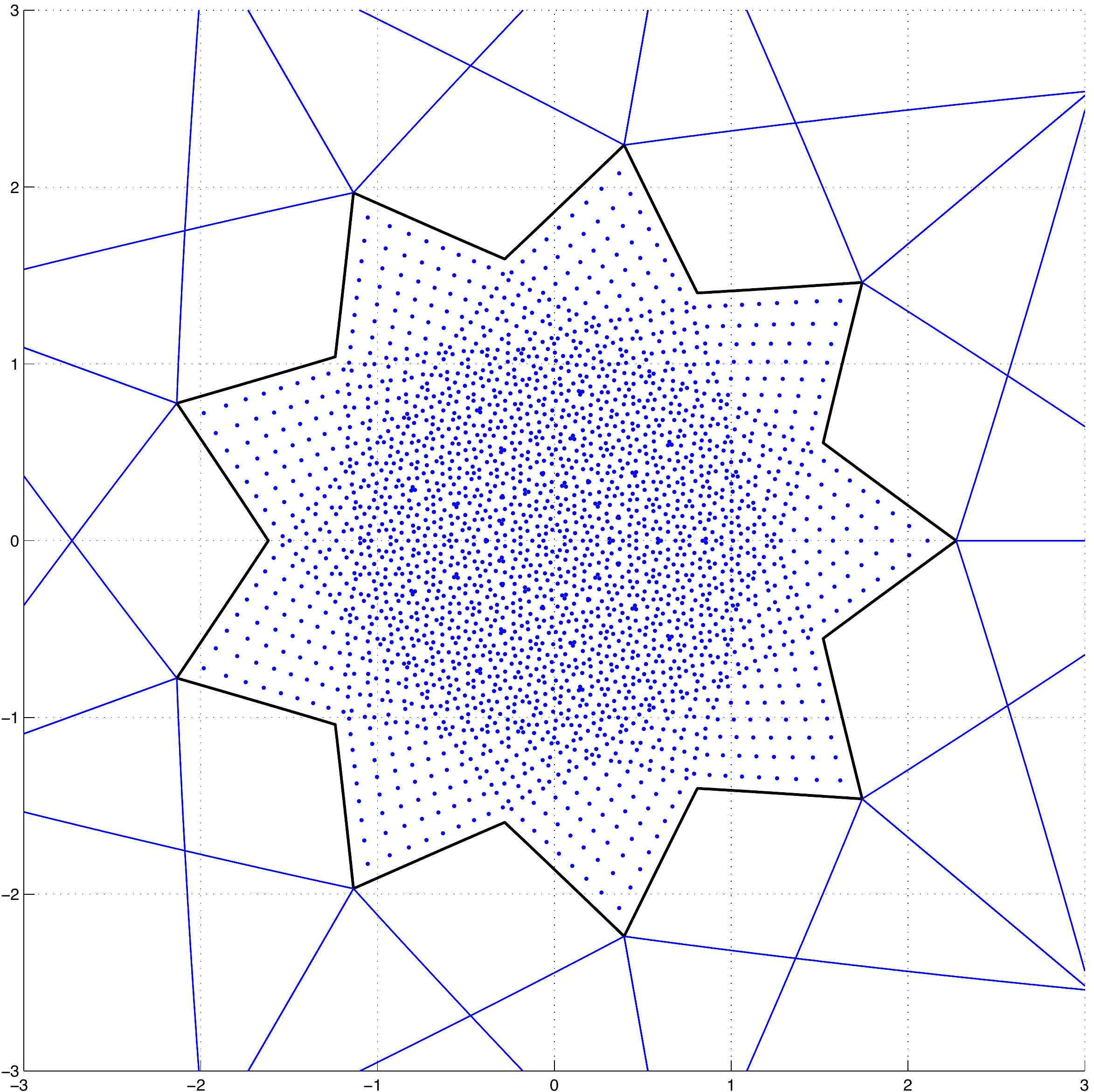}
\includegraphics[width=0.329\textwidth]{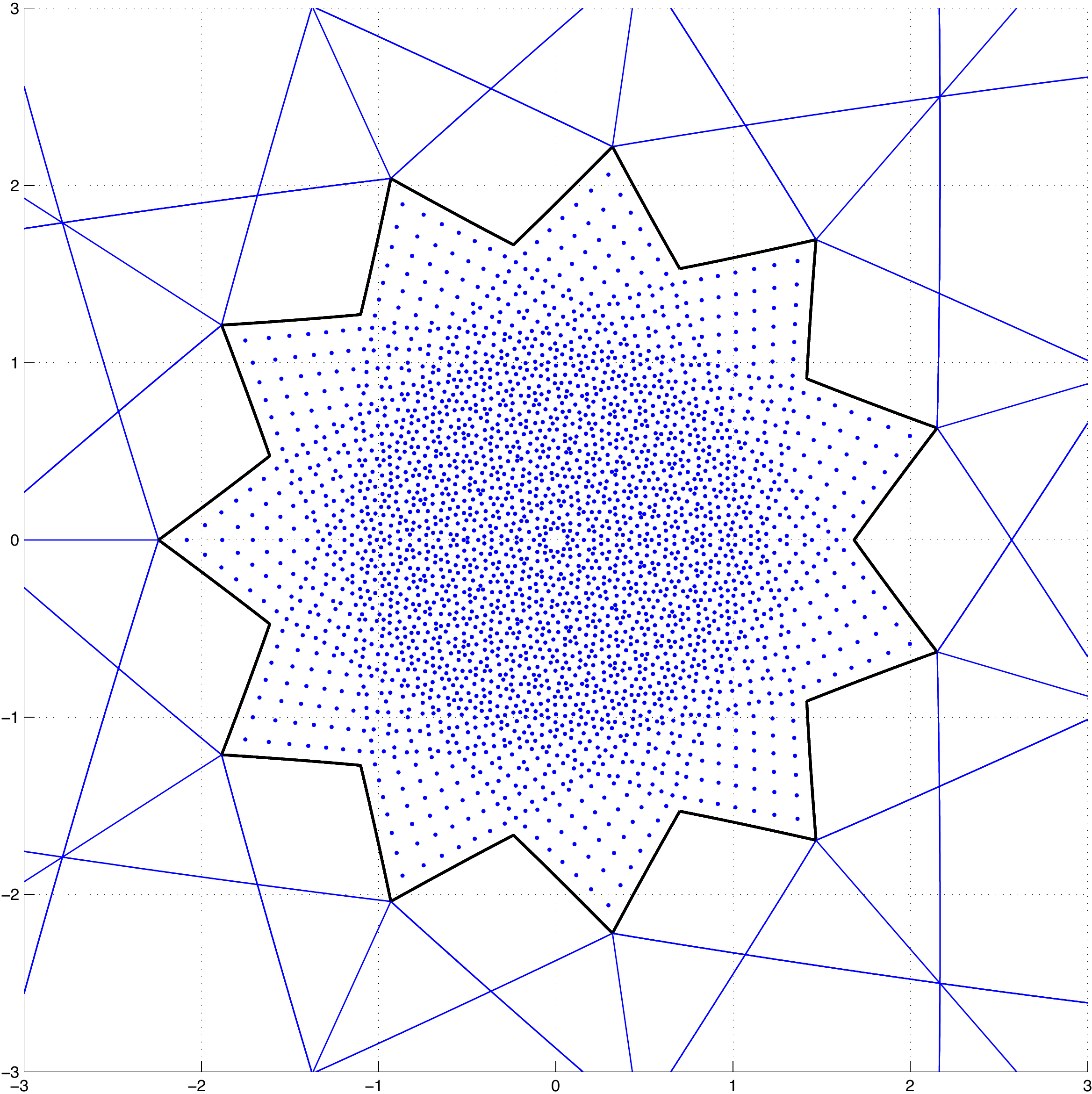}
\includegraphics[width=0.329\textwidth]{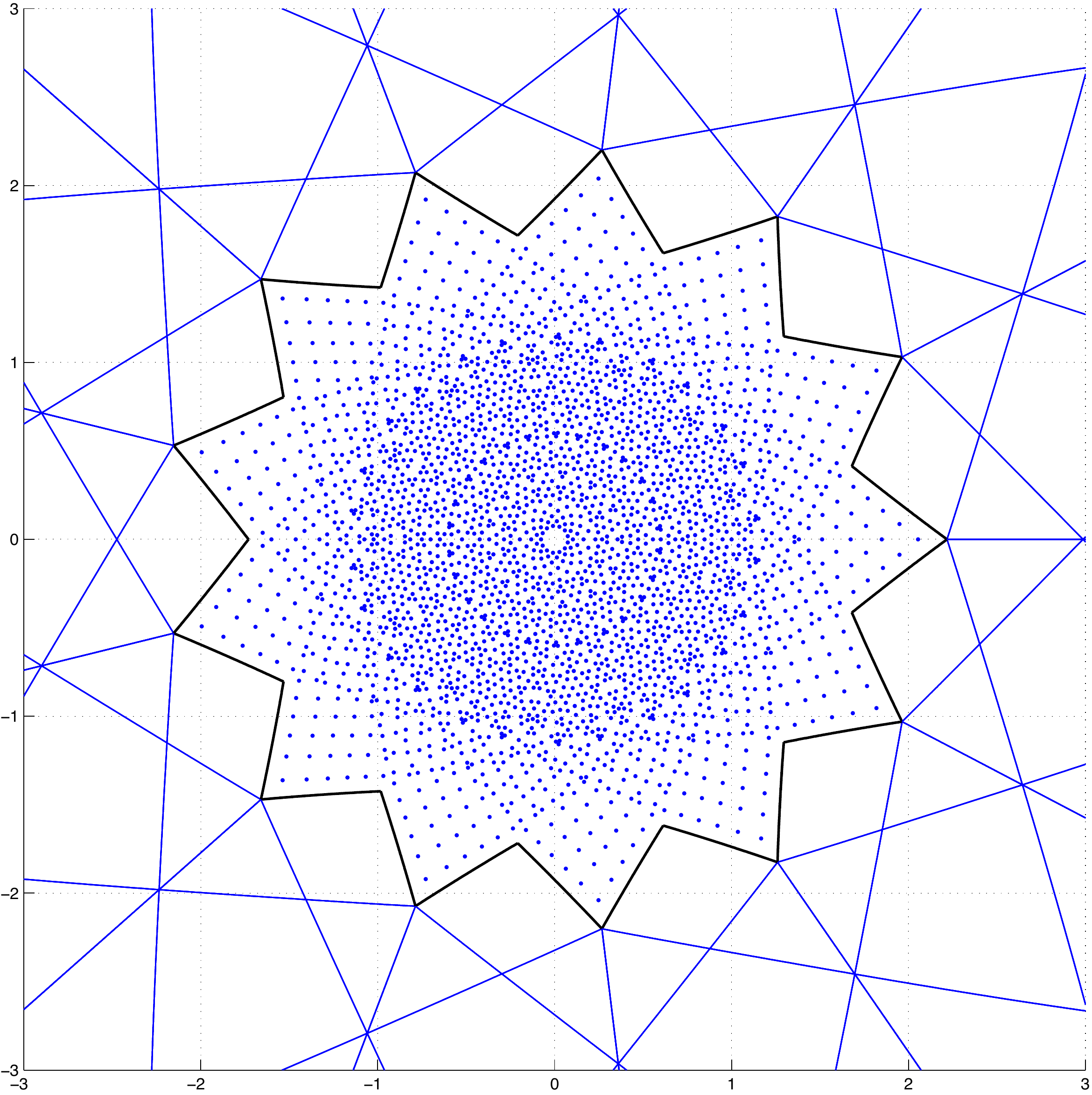}
\includegraphics[width=0.329\textwidth]{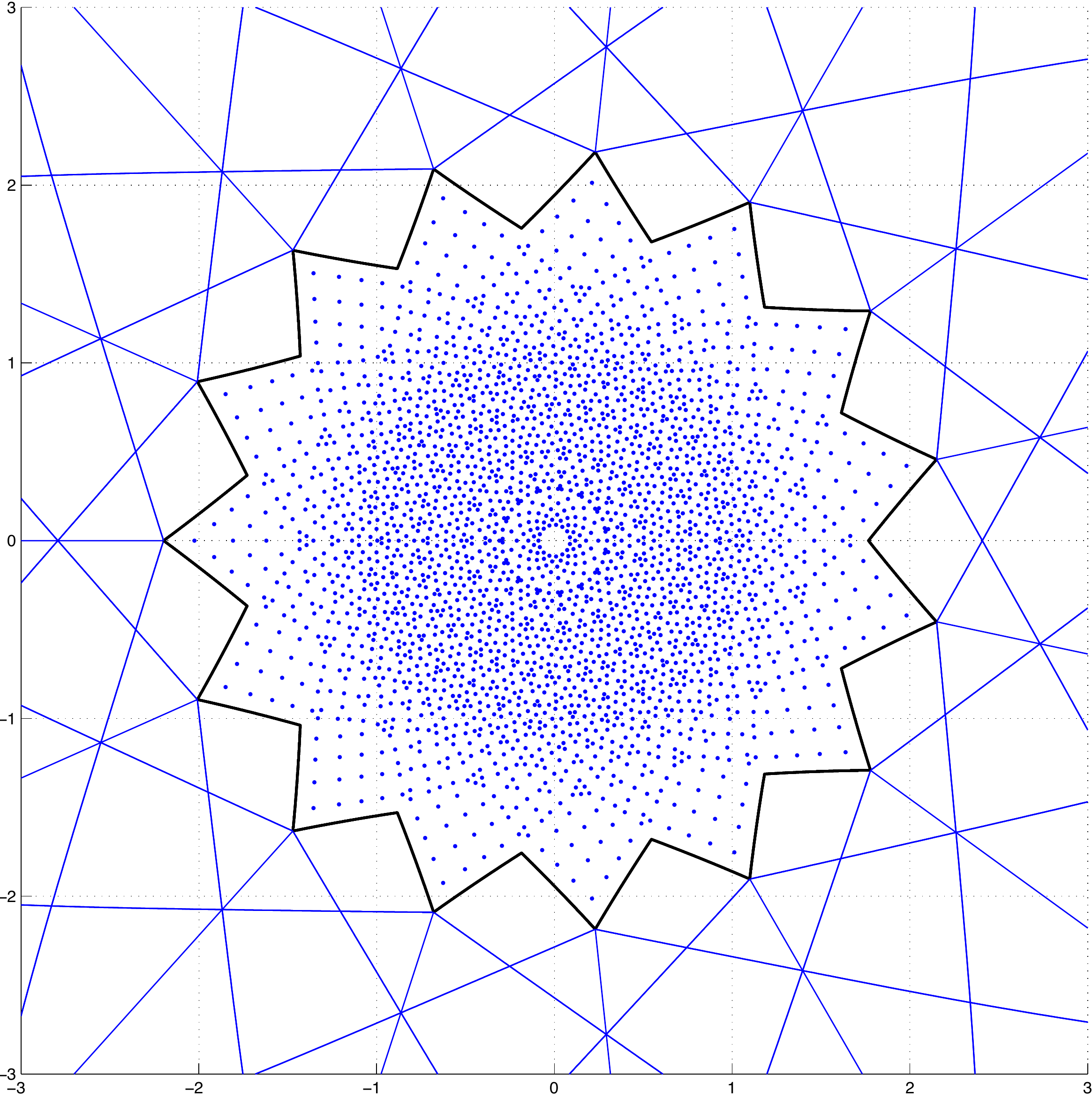}
\caption{The roots of the rescaled higher Vorob'ev-Yablonski polynomials $\mathcal{Q}^{[2]}_{60}$,  $\mathcal{Q}^{[3]}_{70}$,  $\mathcal{Q}^{[4]}_{72}$, $\mathcal{Q}^{[5]}_{77}$, $\mathcal{Q}^{[6]}_{78}$, $\mathcal{Q}^{[7]}_{75}$ (from left to right and top to bottom). See \eqref{higher}, \eqref{roots} for their definition.  
  The symmetry of the pattern is easily explained from the definition of the polynomials. The locations of the outer vertices of the star shaped regions are given in \eqref{starcorners}. The various lines that appear in the Figures are not straight lines but real analytic arcs defined by the implicit equation \eqref{condzero}. It is quite evident that for $N\geq 2$ there are further subdivisions of the star-shaped region into subregions.
  }
\label{Stars}
\end{figure}

\begin{theo}\label{small} Let $\delta_n$ denote the staircase partition \eqref{stair} of length $n\in\mathbb{Z}_{\geq 1}$. For any 
\begin{equation*}
	\un t=(t_3,t_5,\ldots,t_{2N-1})\in\mathbb{C}^{N-1},
\end{equation*}
the generalized Vorob'ev-Yablonski polynomial $\mathcal{Q}_n^{[N]}(x;\un t),x\in\mathbb{C}$ defined in \eqref{diffrel} equals
\begin{equation}\label{Schurid}
	\mathcal{Q}_n^{[N]}(x;\un t)=\prod_{k=1}^n\frac{(2k)!}{2^kk!}\,s_{\delta_n}\left(x,0,2^2t_3,0,2^4t_5,\ldots,2^{2N}t_{2N+1},0,0,0,\ldots\right),\ \ \ \ t_{2N+1}\equiv -\frac{1}{2N+1}.
\end{equation}
\end{theo}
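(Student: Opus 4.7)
Write
\[
\widetilde{\mathcal Q}_n(x;\un t):=c_n\, s_{\delta_n}\bigl(x,\,0,\,2^2 t_3,\,0,\,\ldots,\,2^{2N} t_{2N+1},\,0,\,0,\ldots\bigr),\qquad c_n:=\prod_{k=1}^n\frac{(2k)!}{2^k\,k!}.
\]
Both $\widetilde{\mathcal Q}_n$ and $\mathcal Q_n^{[N]}$ agree for $n=0,1$: indeed $s_{\delta_0}=1$ and $s_{\delta_1}=h_1(\mathbf t)=t_1=x$ by \eqref{JTrudi}--\eqref{hdef}, while $c_1=1$. Since the recursion \eqref{diffrel} determines $\mathcal Q_{n+1}^{[N]}$ rationally from $\mathcal Q_n^{[N]}$ and $\mathcal Q_{n-1}^{[N]}$, it suffices to prove that $\widetilde{\mathcal Q}_n(x;\un t)$ satisfies the very same recursion.

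For that, I would appeal to Sato's theory: for any partition $\lambda$, the Schur polynomial $s_\lambda(\mathbf t)$ is a polynomial $\tau$-function of the KP hierarchy, and on the slice $t_{2j}=0$ (to which the argument of $s_{\delta_n}$ in \eqref{Schurid} belongs) it is a $\tau$-function of the KdV hierarchy. Consequently $u_n:=2\partial_x^2\ln s_{\delta_n}$ is a rational solution of the KdV hierarchy in the odd times, and the staircase sequence $\{s_{\delta_n}\}_{n\ge 0}$ is precisely the Adler--Moser chain, obtained from $s_\emptyset=1,\ s_{(1)}=x$ by iterated Darboux transformations. Consecutive members of this chain are linked by a two-term Pl\"ucker identity on the Jacobi--Trudi determinants \eqref{JTrudi}, which yields a universal bilinear identity of the form
\[
 s_{\delta_{n+1}}\,s_{\delta_{n-1}} = s_{\delta_n}^{2}\cdot\mathcal F\!\left[\,2\partial_x^{2}\ln s_{\delta_n}\,\right],
\]
with $\mathcal F$ a differential polynomial in $u$ and its $x$-derivatives that is expressible in terms of the Lenard operators \eqref{Lenardrec} acting on the Miura potential $u_x-u^2$ of the Painlev\'e-II unknown $\partial_x\ln(s_{\delta_{n-1}}/s_{\delta_n})$.

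The last step is to identify $\mathcal F$ with the braced expression in \eqref{diffrel}. Here I would use the homogeneity property \eqref{homogSchur} to absorb the factors $2^{2k}$ sitting in the Schur arguments, converting the standard KdV times into the physical times $t_{2k+1}$ of the hierarchy \eqref{genPIIhier}. The specific choice $t_{2N+1}=-1/(2N+1)$ then produces precisely the linear-in-$x$ term $xu$ of \eqref{genPIIhier} from the highest KdV flow $\partial/\partial t_{2N+1}$ evaluated at $t_1=x$, while the remaining odd flows contribute the Lenard terms $(2k+1)\,t_{2k+1}\,\mathcal L_k[u_x-u^2]$ with the correct coefficients. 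The multiplicative constant $c_n$ is then pinned down by the known case $N=1$, where the assertion reduces to the Kajiwara--Ohta identity \eqref{i:3} already proved in \cite{KO}.

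The main obstacle is the coefficient bookkeeping: one must verify that the scalings $2^{2k}$, the coefficients $(2k+1)$, the Lenard normalization $\mathcal L_0=\tfrac12$, and the ratio $c_{n+1}c_{n-1}/c_n^2$ combine coherently. These calculations are most efficiently organized by first dividing \eqref{diffrel} by $(\widetilde{\mathcal Q}_n)^2$, so that the braced expression is rephrased purely in terms of $u=2\partial_x^2\ln\widetilde{\mathcal Q}_n$ and its $x$-derivatives, leaving only the ratio $c_{n+1}c_{n-1}/c_n^2$ as an overall normalization to match against the Pl\"ucker side. Nothing in this bookkeeping is conceptually deep, but tracking all constants simultaneously is the delicate part of the argument.
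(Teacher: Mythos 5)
Your overall induction scheme (match the base cases $n=0,1$, then show the Schur-polynomial side satisfies the defining recursion \eqref{diffrel}) is a legitimate strategy, but the step that carries all the weight is asserted rather than proved, and it is not a consequence of a ``two-term Pl\"ucker identity'' in any standard sense. Pl\"ucker relations among the Jacobi--Trudi minors of $s_{\delta_{n\pm1}}$, $s_{\delta_n}$ produce Hirota-type bilinear equations with shifted time arguments; they do not by themselves produce an identity of the form $s_{\delta_{n+1}}s_{\delta_{n-1}}=s_{\delta_n}^2\,\mathcal F\bigl[2\partial_x^2\ln s_{\delta_n}\bigr]$ in which $\mathcal F$ is the specific braced Lenard-operator expression of \eqref{diffrel}. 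Getting from the tau-function structure to that closed form requires (i) a Miura-type relation linking $2\partial_x^2\ln s_{\delta_n}$ to $\partial_x\ln(s_{\delta_n}/s_{\delta_{n+1}})$, and (ii) a self-similarity (scaling) reduction that converts the $t_{2N+1}$-flow into the explicit $x$-dependence at $t_{2N+1}=-\tfrac1{2N+1}$. You gesture at (ii) in your penultimate paragraph, but (i) is absent, and without it the identification of $\mathcal F$ is exactly the theorem you are trying to prove, repackaged. Calling the remaining work ``coefficient bookkeeping'' understates what is missing.

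For comparison, the paper does not verify \eqref{diffrel} for the Schur side at all. It proves the Miura relation \eqref{Miura2} for $g_n=\ln s_{\delta_n}$ (separately, in Appendix A, via Hankel-determinant and Riemann--Hilbert identities), deduces that $-\partial_x W_n$ solves the mKdV hierarchy, performs the scaling reduction to land on the ODE \eqref{genPIIhier} with $\alpha_N=n+1$, and then invokes the known \emph{uniqueness} of rational solutions of the Painlev\'e~II hierarchy to conclude $\mathcal Q_n^{[N]}=c_{n,N}(\un t)\,s_{\delta_n}(\cdots)$; the constant is fixed by comparing leading coefficients through the hook-length formula, since both sides are (proportional to) monic polynomials of degree $\tfrac n2(n+1)$. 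If you want to salvage your route, you would need to actually derive the bilinear identity you postulate — which in practice means reproving the Demina--Kudryashov recursion for the Adler--Moser chain — or else switch to the uniqueness argument, which is what makes the paper's proof short.
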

Besides the Jacobi-Trudi type identity \eqref{i:3}, Vorob'ev-Yablonski polynomials can also be expressed as Hankel determinants, in fact in \cite{BB} the following Hankel determinant representation for the squared polynomial $\mathcal{Q}_n^2(x)$ was obtained,
\be\label{the1}
  \mathcal{Q}_{n}^2(x) = (-1)^{\binom{n+1}{2}}\frac{1}{2^{n}}\prod_{k=1}^{n}\left[\frac{(2k)!}{k!}\right]^2\,\det\big[\mu_{\ell+j-2}(x)\big]_{\ell,j=1}^{n+1},\ \ \ x\in\mathbb{C}
\ee
with $\{\mu_k(x)\}_{k\in\mathbb{Z}_{\geq 0}}$ defined by the generating function
\begin{equation*}
	\exp\left[xw -\frac {w^3}3\right] =\sum_{j=0}^\infty \mu_j(x) w^j.
\end{equation*}
In our second Theorem we present the analogue of \eqref{the1} for the generalized Vorob'ev-Yablonski polynomial $\mathcal{Q}_n^{[N]}(x;\un t)$.
\begin{theo}\label{Fesq} Let $\un t=(t_3,\ldots,t_{2N-1})\in\mathbb{C}^{N-1}$ and $n\in\mathbb{Z}_{\geq 1}$. For any $x\in\mathbb{C}$ we have the Hankel determinant representation
\begin{equation}\label{Fesqid}
	\Big(\mathcal{Q}_n^{[N]}(x;\un t)\Big)^2=(-1)^{\binom{n+1}{2}}\frac{1}{2^{n}}\prod_{k=1}^{n}\left[\frac{(2k)!}{k!}\right]^2\,\det\Big[\mu_{\ell+j-2}^{[N]}(\bt_o)\Big]_{\ell,j=1}^{n+1}
\end{equation}
where we use the abbreviation
\begin{equation*}
	\bt_o=\big(t_1,0,t_3,0,t_5,\ldots,t_{2N-1},0,t_{2N+1},0,0,0,\ldots\big);\ \ \ \ \ t_1=x,\ \ \ \ t_{2N+1}=-\frac{1}{2N+1}
\end{equation*}
and the coefficients $\{\mu_j^{[N]}(\bt_o)\}_{j\in\mathbb{Z}_{\geq 0}}$ are defined by the generating function
\begin{equation}\label{muk}
	\exp\left[\sum_{j=1}^{\infty}t_jw^j\right]=\sum_{k=0}^{\infty}\mu_k^{[N]}(\bt_o)w^k,\ \ \ \ \ \ t_j\equiv 0,\ j>2N+1.
\end{equation}
\end{theo}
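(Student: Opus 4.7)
My plan is to reduce Theorem~\ref{Fesq} to a purely algebraic Schur-squared identity by first invoking Theorem~\ref{small}, and then to establish that identity by generalizing the strategy of \cite{BB} from the single parameter $t_3=-\tfrac13$ to an arbitrary vector of odd times.

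\textbf{Step 1 (Reduction to a Schur-squared identity).} Theorem~\ref{small} gives
\[
\mathcal{Q}_n^{[N]}(x;\un t) \;=\; \prod_{k=1}^n\frac{(2k)!}{2^k k!}\,s_{\delta_n}(\bt'),\qquad \bt'_{2k+1}=2^{2k}t_{2k+1},\quad \bt'_{2k}=0.
\]
Squaring both sides, and using $h_k(\bt_o) = \mu_k^{[N]}(\bt_o)$ (immediate from \eqref{hdef} and \eqref{muk}), the theorem reduces to the generic identity
\[
s_{\delta_n}(\bt')^2 \;=\; (-1)^{\binom{n+1}{2}}\,2^{n^2}\,\det\bigl[h_{\ell+j-2}(\bt_o)\bigr]_{\ell,j=1}^{n+1},
\]
valid for any $\bt_o$ of the form $(t_1,0,t_3,0,t_5,\ldots)$. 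The constant ratio $\prod_{k=1}^n[(2k)!/k!]^2/\prod_{k=1}^n[(2k)!/(2^k k!)]^2 = 2^{n(n+1)}$, combined with the $2^{-n}$ in \eqref{Fesqid} and the $2^{n^2}$ in the identity above, produces the required balance, so only this algebraic identity remains to be proved.

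\textbf{Step 2 (Proof of the identity).} The structural input is that $\delta_n$ is self-conjugate and that, for odd-time $\bt_o$, the generating series $F(w)=\exp[\sum_{k\ge0}t_{2k+1}w^{2k+1}]=\sum_k h_k(\bt_o)w^k$ satisfies the involutive relation $F(w)F(-w)\equiv 1$. I would first rewrite $s_{\delta_n}(\bt')$ as a Wronskian in $x=t_1$: from the Jacobi--Trudi form of \eqref{JTrudi} together with $\partial_x h_k(\bt')=h_{k-1}(\bt')$ (evident from \eqref{hdef}), column-reordering yields
\[
s_{\delta_n}(\bt')=(-1)^{\binom{n}{2}}\,\mathrm{Wr}_x\bigl[h_1(\bt'),h_3(\bt'),\ldots,h_{2n-1}(\bt')\bigr].
\]
I would then represent each $h_{2j-1}(\bt')$ as a contour integral $\oint \frac{\d w}{2\pi\im}\,w^{-2j}F_2(w)$ with $F_2(w)=\exp[\sum_k 4^k t_{2k+1}w^{2k+1}]$, and apply Andreief's identity to compute the square of the Wronskian. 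The crucial point is that $F_2(w)F_2(-w)\equiv 1$, which, after the Andreief expansion, eliminates pairwise integrations and, following a second Andreief collapse, produces a single $(n+1)\times(n+1)$ Hankel determinant with entries of the form $\oint\frac{\d w}{2\pi\im}\,w^{-\ell-j}F_2(w)=h_{\ell+j-2}(\bt')$. Finally, the homogeneity \eqref{homogSchur} (applied with $\epsilon=2$) converts each $h_{\ell+j-2}(\bt')$ into $h_{\ell+j-2}(\bt_o)$ at the cost of the total factor $2^{n^2}$; the sign $(-1)^{\binom{n+1}{2}}$ accumulates from the column reordering and the Vandermonde signs in the two Andreief steps.

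\textbf{Main obstacle.} The delicate aspect is the Andreief collapse step, where the identity $F_2(w)F_2(-w)\equiv 1$ must be leveraged to merge the double Wronskian into a single Hankel determinant while carefully tracking the power-of-two scaling (from $\bt'\leftrightarrow \bt_o$ via \eqref{homogSchur}) and the signs (from column reordering and the Vandermonde factors). Small-$n$ sanity checks constrain the bookkeeping decisively: at $n=1$, $\det[h_{i+j-2}(\bt_o)]_{i,j=1,2}=h_0 h_2-h_1^2=-x^2/2$ and $-2\cdot(-x^2/2)=x^2=(\mathcal{Q}_1^{[N]})^2$; at $n=2$, $s_{\delta_2}(\bt')=x^3/3-4t_3$, the corresponding $3\times3$ Hankel determinant equals $-x^6/144+x^3t_3/6-t_3^2$, and $-16$ times this yields $(x^3/3-4t_3)^2$, consistent with $(-1)^{\binom{n+1}{2}}=-1$ and $2^{n^2}=16$.
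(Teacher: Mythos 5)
Your Step 1 is exactly the reduction the paper performs: Theorem \ref{small} together with the observation $\mu_k^{[N]}(\bt_o)=h_k(\bt_o)$ reduces \eqref{Fesqid} to the purely algebraic identity
\begin{equation*}
s_{\delta_n}^2\big(t_1,0,2^2t_3,0,2^4t_5,\ldots\big)=2^{n^2}s_{(n+1)^n}\big(t_1,0,t_3,0,\ldots\big)=(-1)^{\binom{n+1}{2}}\,2^{n^2}\det\big[h_{\ell+j-2}(\bt_o)\big]_{\ell,j=1}^{n+1},
\end{equation*}
and your constant/sign bookkeeping and the $n=1,2$ checks are all correct. The gap is in Step 2. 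The paper disposes of the remaining identity by citing two known facts about Schur $Q$-functions: $s_{\delta_n}$ at odd times equals the projective Schur polynomial $P_{\delta_n}$ (Macdonald, \S 3.8, ex.\ 3), and You's doubling theorem $2^{\ell(\lambda)}P_\lambda^2(\bt/2)=s_{\bar\lambda}(\bt)$ for strict $\lambda$, whose specialization to $\lambda=\delta_n$ produces precisely the rectangle $(n+1)^n$; the transposition identity \eqref{transpose} and a row reversal then give the Hankel form and the sign. You instead propose a from-scratch contour-integral proof, and its decisive step --- the ``Andreief collapse'' turning the square of the $n$-fold integral $\oint\!\cdots\!\oint\Delta(w)\prod_j w_j^{-2j}F_2(w_j)\,\d w_j$ into a single $(n+1)$-fold Heine-type integral carrying $\Delta(z)^2$ --- is asserted, not carried out. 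This is not a routine application of Andreief's identity: Andreief converts a determinant of single integrals into one multiple integral of a product of two determinants (or back), but it does not merge the product of two independent $n$-fold integrals over disjoint variable sets into an $(n+1)$-fold integral. The relation $F_2(w)F_2(-w)\equiv1$ is indeed the right structural input, but exploiting it requires a pfaffian/de Bruijn-type argument (e.g.\ Nimmo's pfaffian representation of $P_{\delta_n}$ followed by $\mathrm{Pf}^2=\det$ and identification of the antisymmetric kernel with the Hankel moments) --- which is essentially a proof of the very theorem of You that the paper invokes. As written, the heart of your Step 2 restates the claim rather than proving it.

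A secondary slip: homogeneity \eqref{homogSchur} cannot convert the individual entries $h_{\ell+j-2}(\bt')$ into multiples of $h_{\ell+j-2}(\bt_o)$, because $\bt'_{2k+1}=2^{2k}t_{2k+1}$ is not of the weighted form $\epsilon^{j}t_j$ (compare $h_3(\bt')=t_1^3/6+4t_3$ with $h_3(\bt_o)=t_1^3/6+t_3$, which are not proportional). The correct accounting, as in Lemma \ref{lemmadouble}, is that the collapse naturally yields the determinant at the doubled times $(2t_1,0,2^3t_3,\ldots)$ with prefactor $2^{-n}$, and only then does weighted homogeneity of the full Schur function give $2^{-n}\cdot2^{n(n+1)}=2^{n^2}$. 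Your total power of two is right, but the route to it needs this repair.
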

\br
In fact, the statement of Theorem \ref{Fesq} is the specialization of a more general identity for Schur functions (compare Lemma \ref{lemmadouble} below) which in our case reads 
\be
	s_{\delta_n}^2\big(t_1,0,{2^2}t_3,0,{2^4}t_5,0,\ldots\big)= 
 {2^{n^2}}s_{(n+1)^{n}}\left(t_1,0,t_3,0,t_5,0,t_7,\ldots\right). 
\ee 
Here, $\lambda=(n+1)^n$ denotes the rectangular partition with $n+1$ rows of length $n$ and the specialization consists in simply setting
\begin{equation*}
	t_1=x,\ \ \ \ \ t_{2N+1}=-\frac{1}{2N+1},\ \ \ \ \ t_j\equiv 0,\ j>2N+1.
\end{equation*}
\er
\subsection{Roots of higher Vorob'ev-Yablonski polynomials}
In analogy to \cite{BB}, we provide a direct application of Theorem \ref{Fesq}. Numerical studies carried out in \cite{KNFH,CM,DK2} show that the zeros of generalized Vorob'ev-Yablonski polynomials form highly regular and symmetric patterns as can be clearly seen in Figure \ref{Stars}. These patterns in case of the Painlev\'e II equation itself have been first analyzed in \cite{BuM1,BuM2}. However, the approach outlined in \cite{BB} starts directly from \eqref{the1} and not from a Lax pair associated with \eqref{i:1}. To be more precise, the identity \eqref{Fesqid} allows us to localize the roots of  the generalized  Vorob'ev-Yablonski polynomials as $n\to\infty$  by analyzing associated pseudo-orthogonal polynomials. Of course in the generalized case these patterns depend on the parameters $\{t_{2j+1}\}_{j=1}^{N-1}$ (compare \cite{DK2}); we shall confine ourselves here to the case of {\em higher Vorob'ev-Yablonski polynomials}, namely the case 
\begin{equation}
\label{higher}
	t_3=t_5=\ldots=t_{2N-1}=0.
\end{equation}
More specifically, we are considering the roots of the rescaled higher Vorob'ev-Yablonski polynomials 
\be\label{roots}
\Roots_n^{[N]} = \le\{
x\in\mathbb{C}:\,\,\mathcal{Q}_n^{[N]}\le(n^{\frac {2N}{2N+1}} x \ri) =0
\ri \}.
\ee
These sets admit a discrete $\Z_{2N+1}$ rotational symmetry, which follows immediately from Theorem \ref{small} and the homogeneity \eqref{homogSchur};
\be
\mathcal{Q}_{n}^{[N]}( \omega x)= \omega^{\frac {n}{2}(n+1)} \mathcal{Q}_n^{[N]}(x),\hspace{1cm}\omega= {\rm e}^{\frac {2\pi\im}{2N+1}}.
\ee
We can provide a partial  analytic description for the boundary of the polygons $P_N$ seen in Figure \ref{Stars} which asymptotically contain the sets \eqref{roots} as $n\to\infty$. More precisely we have first the following Theorem.

\begin{theo}
There exists a compact region $P_N$ in the complex $x$--plane such that for any $\delta>0$ the root sets $\Roots_n^{[N]}$ are contained in a $\delta$-neighborhood $\mathcal N_\delta(P_N)$ of $P_N$ as $n\rightarrow\infty$.
\end{theo}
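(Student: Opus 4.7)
The plan is to exploit the Hankel determinant representation of Theorem \ref{Fesq}. Under the specialization \eqref{higher}, the generating function \eqref{muk} collapses to $\exp\!\bigl(xw - \tfrac{w^{2N+1}}{2N+1}\bigr)$, and a Cauchy-type deformation lets one represent the moments as contour integrals
\begin{equation*}
\mu_k^{[N]}(\mathbf t_o) = \frac{1}{2\pi\im}\int_\Sigma w^k \exp\!\left(xw - \frac{w^{2N+1}}{2N+1}\right) dw,
\end{equation*}
for any contour $\Sigma\subset\mathbb{C}$ consisting of rays running out to infinity inside the Stokes sectors where $\operatorname{Re}(w^{2N+1})\to+\infty$ and encircling the origin once. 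Consequently the Hankel determinant in \eqref{Fesqid} is the $\tau$-function of the associated family of monic pseudo-orthogonal polynomials $p_k(w;x)$ on $\Sigma$ with weight $e^{xw - w^{2N+1}/(2N+1)}\,dw$, and $\mathcal Q_n^{[N]}(y)$ vanishes exactly at those $y$ for which one of the norming constants $h_k(y)$, $0\le k\le n$, vanishes.

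Next, the scaling $y = n^{2N/(2N+1)} x$, $w = n^{1/(2N+1)} z$ balances the two terms in the exponent and turns the weight into the semiclassical form $e^{-nV(z;x)}\,dz$ with
\begin{equation*}
V(z;x) = -xz + \frac{z^{2N+1}}{2N+1}.
\end{equation*}
The Fokas--Its--Kitaev $2\times 2$ Riemann--Hilbert problem for the rescaled monic polynomials is then in the standard Deift--Zhou form with large parameter $n$, and the rescaled root set $\Roots_n^{[N]}$ is precisely the set of $x$ for which this RH problem fails to be solvable at size $n$.

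The theorem reduces to proving that there exists $R>0$ such that $h_n(x)\ne 0$ uniformly for $|x|\ge R$ and $n$ sufficiently large. Granting this, one simply defines
\begin{equation*}
P_N := \bigcap_{n_0\ge 1}\overline{\bigcup_{n\ge n_0} \Roots_n^{[N]}}\,,
\end{equation*}
which is then a compact subset of $\{|x|\le R\}$, and a straightforward compactness argument using the complement shows that for any $\delta>0$ one has $\Roots_n^{[N]}\subset\mathcal N_\delta(P_N)$ for all $n$ large enough. The non-vanishing of $h_n(x)$ for large $|x|$ will be established by the nonlinear steepest descent method: one constructs an equilibrium measure $\mu_x$ minimizing the weighted logarithmic energy $I[\mu]+\int\operatorname{Re}V(\cdot;x)\,d\mu$ on an S-curve $\Sigma_x\subset\mathbb{C}$ in the sense of Rakhmanov and Kamvissis--Rakhmanov; the corresponding $g$-function transforms the RH problem into a small-norm problem, producing leading-order asymptotics of the form $h_n(x)\sim c_n(x)\,e^{-n^2F(x)}$ with $c_n(x)\ne 0$.

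The principal obstacle is the construction of the S-curve $\Sigma_x$ and the verification of the regularity of $\mu_x$: because the external field $\operatorname{Re}V(\cdot;x)$ is non-convex and the weight is complex-valued, existence and regularity of the equilibrium measure are not automatic. For $|x|\gg 1$ however, the critical points of $V(\cdot;x)$ satisfy $z_*^{2N}=x$ and are well-separated, so a perturbative construction of a one-cut regular equilibrium measure concentrated near the dominant saddle is available, and the analysis proceeds exactly as in \cite{BB} for the $N=1$ case by tracking the horizontal trajectories of the quadratic differential determined by $\mu_x$. Making the estimates uniform in the exterior region $|x|\ge R$ is the technically delicate point, but once in place it forces the rescaled zeros to accumulate inside the compact set $P_N$ as claimed.
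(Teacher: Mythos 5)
Your overall strategy --- Hankel determinant $\Rightarrow$ (pseudo-)orthogonal polynomials $\Rightarrow$ Riemann--Hilbert problem $\Rightarrow$ $\gg$-function and steepest descent, with $P_N$ the locus where the genus-zero inequalities fail --- is exactly the paper's. Your definition $P_N=\bigcap_{n_0}\overline{\bigcup_{n\ge n_0}\Roots_n^{[N]}}$ together with the compactness argument is also a legitimate way to obtain the bare existence statement, once a uniform bound $\Roots_n^{[N]}\subset\{|x|\le R\}$ is in hand. However, there is a concrete error at the very first step which derails the analytic part of the argument.

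The moments $\mu_k^{[N]}(\bt_o)$ are by definition the Taylor coefficients of $\exp\big(xw-\frac{w^{2N+1}}{2N+1}\big)$ at $w=0$; they are polynomials in $x$, extracted by a residue at the origin,
\begin{equation*}
\mu_k^{[N]}(\bt_o)=\frac{1}{2\pi\im}\oint_{|w|=\epsilon}w^{-k-1}\exp\Big(xw-\tfrac{w^{2N+1}}{2N+1}\Big)\,\d w
=\frac{1}{2\pi\im}\oint_{S}z^{k}\,{\rm e}^{\frac xz-\frac{z^{-2N-1}}{2N+1}}\,\frac{\d z}{z}.
\end{equation*}
They cannot be written as $\frac{1}{2\pi\im}\int_\Sigma w^k e^{xw-w^{2N+1}/(2N+1)}\,\d w$ over contours running to infinity in the decay sectors: with the nonnegative power $w^k$ the integrand is entire, so ``encircling the origin'' contributes nothing, and what remains is a generalized Airy-type integral --- an entire transcendental function of $x$, not a polynomial; moreover the deformation of a small circle to an unbounded contour is obstructed by the growth sectors of $e^{-w^{2N+1}/(2N+1)}$. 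Hankel determinants built from such Airy-type moments are tau functions of a different (non-rational) family of solutions, not of $\mathcal Q_n^{[N]}$. Consequently the external field in your semiclassical problem is not the right one: after rescaling, the correct weight is $e^{-n\vartheta(z;x,N)}$ with $\vartheta(z;x,N)=\frac{1}{(2N+1)z^{2N+1}}-\frac xz$ on a closed Jordan curve around the origin (Problem \ref{master}), not $e^{-nV}$ with $V=-xz+\frac{z^{2N+1}}{2N+1}$ on S-curves, and the relevant branch points are $z=\pm\im a$ with $a$ solving $(2a)^{2N+1}-x(2a)^{2N}+(-1)^N\binom{2N}{N}=0$, rather than the saddles $z_*^{2N}=x$ of your $V$. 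The paper also sidesteps your ``principal obstacle'' (existence and regularity of an equilibrium measure on an S-curve): it makes the explicit genus-zero ansatz \eqref{y1}, solves for $a(x)$ and $P(z;a)$ in closed form, and verifies the inequalities for large $|x|$ by an explicit limit (Proposition \ref{cutBg0}), so that $P_N$ is the closed bounded set where these inequalities fail. A minor additional imprecision: $\mathcal Q_n^{[N]}$ vanishes iff the $(n+1)\times(n+1)$ Hankel determinant vanishes, which is cleanly equivalent to non-solvability of the Riemann--Hilbert problem; phrasing this as ``some norming constant $h_k$ vanishes'' presupposes that the lower-order determinants are nonzero.
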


The description of the regions $P_N$ is provided in part  by Theorem \ref{thm:boundary} below. First we require 
\bd Given $N\in\mathbb{Z}_{\geq 1}$, let  $a=a(x;N),x\in\mathbb{C}$ denote the unique solution of the algebraic equation
\begin{equation}\label{i:8}
	(2a)^{2N+1}-x(2a)^{2N}+(-1)^N\binom{2N}{N}=0
\end{equation}
which is analytic in the domain 
\be\nonumber
x\in\C\Big\backslash\bigcup_{k=0}^{2N}\Big[0, x_k^{[N]}\Big]
\ee
and behaves near $x=\infty$ as 
\begin{equation}\label{pick}
	a=\frac{x}{2}+\mathcal{O}\left(x^{-2N}\right),\ \ \ x\rightarrow\infty.
\end{equation}
Here, the points $x=x_k^{[N]},k=0,\ldots,2N$  are the solutions of
\begin{equation}
\label{starcorners}
	x^{2N+1}=(-1)^{N}(2N+1)\left(\frac{2N+1}{2N}\right)^{2N}\binom{2N}{N},
\end{equation}
and form the outer vertices of the regular star-shaped regions shown in Figure \ref{Stars}.
\ed

\begin{theo}
\label{thm:boundary}
 The regions  $P_N$ are compact,  invariant under the rotations of angle $\frac {2\pi}{2N+1}$,  contain the origin and their boundary $\partial P_N$ consists of branches of the locus in the complex $x$-plane described by 
\begin{equation}\label{boundaries}
\mathfrak Z_N= \le\{ x\in\C: \ \ \Re\big(\varphi(z;a)\big)\Big|_{z=z_k^{[N]}}=0\ri\}.
\end{equation}
Here  $z=z_k^{[N]},k=1,\ldots,2N$ are the solutions of the equation
\begin{equation*}
	z^{2N}-\frac{1}{2a}T_{N-1,-\frac{1}{2}}\left(\frac{z^2}{a^2}\right)=0, 
\end{equation*}
where $T_{m,\alpha}(z)$ denotes the Maclaurin polynomial of degree $m\in\mathbb{Z}_{\geq 0}$ of the function $(1+z)^{\alpha}=1+\mathcal{O}(z),z\rightarrow 0$.
Moreover  $a=a(x)$ is defined in \eqref{i:8} and \eqref{pick}, and the function $\varphi$ is defined by
\begin{equation*}
	\varphi(z;a)=-2\ln\left(\frac{z+(z^2+a^2)^{\frac{1}{2}}}{\im a}\right)+\frac{2}{z}\big(z^2+a^2\big)^{\frac{1}{2}}-\frac{1}{2N+1}\frac{(z^2+a^2)^{\frac{3}{2}}}{a^3z^{2N+1}} T_{N-1,-\frac{3}{2}}\left(\frac{z^2}{a^2}\right)
\end{equation*}
with principal branches for fractional exponents and logarithms. 
\end{theo}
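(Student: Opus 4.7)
The plan is to leverage Theorem \ref{Fesq} to convert the root-localization problem into the analysis of a Riemann--Hilbert problem (RHP) via the Deift--Zhou nonlinear steepest descent method. With $t_3=t_5=\ldots=t_{2N-1}=0$ as in \eqref{higher}, the moments admit the contour-integral representation
\begin{equation*}
\mu_k^{[N]}(\bt_o)=\frac{1}{2\pi\im}\oint_\G w^{-k-1}\exp\le(xw-\frac{w^{2N+1}}{2N+1}\ri)\d w,
\end{equation*}
and by Heine's identity the Hankel determinant of Theorem \ref{Fesq} is, up to a nonzero prefactor, the norming constant of the pseudo-orthogonal polynomials of degree $n$ with respect to the functional $f\mapsto\oint_\Sigma f(w)\exp(xw-w^{2N+1}/(2N+1))\d w$, taken along a contour $\Sigma$ escaping to infinity in the Stokes sectors of decay of the weight. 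Consequently $x\in\Roots_n^{[N]}$ precisely when the standard $2\times 2$ RHP for these polynomials fails to admit a solution, and the localization task reduces to identifying the open $x$-region on which a small-norm RH analysis can be carried out uniformly for all sufficiently large $n$.

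After the rescaling $x\mapsto n^{2N/(2N+1)}x$ and $w\mapsto n^{1/(2N+1)}z$ the weight becomes $\exp(nV(z;x))$ with $V(z;x)=xz-z^{2N+1}/(2N+1)$, placing one in the semiclassical regime. One then constructs a $g$-function whose equilibrium measure is supported on a Schwarz-symmetric arc through the two endpoints $\pm\im a(x;N)$; the combined requirements that the derivative of $g$ be single-valued on the hyperelliptic cover $y^2=z^2+a^2$ and that $g$ have the correct behaviour at infinity force the algebraic constraint \eqref{i:8} on the endpoint location, with the branch of $a$ pinned by \eqref{pick}. The resulting effective exterior phase is exactly the function $\varphi(z;a)$ of the theorem: a direct computation gives $\varphi'(z)$ as $(z^2+a^2)^{1/2}/z^{2N+2}$ multiplied by an explicit polynomial in $z^2$ whose coefficients are produced by the Maclaurin truncation of $(1+u)^{-1/2}$ needed to enforce single-valuedness, so that the $2N$ off-arc stationary points of $\varphi$ are precisely the solutions $z_k^{[N]}$ of the polynomial equation in the statement. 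Integrating $\varphi'$ once, eliminating the free constant by the asymptotic condition at $z=\infty$, and identifying the resulting antiderivative of $(z^2+a^2)^{3/2}/z^{2N+2}$ with the $T_{N-1,-3/2}$ Maclaurin expression produces the closed form for $\varphi$ given in the theorem.

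The compact set $P_N$ is then defined as the collection of $x$ for which every saddle $z_k^{[N]}$ lies in the open sign region $\{\Re\varphi(z;a(x;N))<0\}$; in its interior the standard steepest-descent lens openings succeed, the jumps off the arc are exponentially close to the identity, Airy-type local parametrices apply at $\pm\im a$, and the Hankel determinant is consequently asymptotically nonzero. Therefore $\partial P_N\subset\mathfrak{Z}_N$, because the only mechanism by which this scheme can fail is the crossing of at least one saddle $z_k^{[N]}$ through the level set $\{\Re\varphi=0\}$, which in turn changes the Stokes topology of the quadratic differential $(\varphi'(z))^2\d z^2$ and allows $\Roots_n^{[N]}$ to accumulate. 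Compactness follows from $\varphi(z;a)\sim 2\log z$ at infinity via \eqref{pick}, the $\Z_{2N+1}$-rotational symmetry is already recorded in \eqref{roots}, and $0\in P_N$ is checked directly from the symmetric configuration of saddles at $x=0$. The main obstacle throughout is the global bookkeeping of branch choices and the Stokes topology: one must select the sheet of $a(x;N)$ consistent with \eqref{pick} throughout the unbounded component of $\C\setminus\mathfrak{Z}_N$, verify that the critical trajectories of $(\varphi'(z))^2\d z^2$ emanating from $\pm\im a$ reach infinity in the prescribed decay sectors for every $x$ in the interior of $P_N$, and match the combinatorics of the Maclaurin truncations $T_{N-1,-1/2}$ and $T_{N-1,-3/2}$ with the single-valuedness requirements on the two-sheeted cover.
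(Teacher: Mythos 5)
Your overall architecture --- Hankel determinant $\to$ Riemann--Hilbert problem $\to$ genus-zero $\gg$-function with branch points $\pm\im a$ determined by \eqref{i:8} $\to$ explicit $\varphi$ and its saddles $z_k^{[N]}$ $\to$ $\pa P_N\subset\mathfrak Z_N$ --- is the same as the paper's, and your derivation of the polynomial $P(z;a)$ and of the closed form of $\varphi$ is essentially the paper's computation. However, the logic of your localization step is reversed, and this is a genuine error rather than a slip of wording. You define $P_N$ as the set of $x$ where the steepest-descent scheme \emph{succeeds} and conclude that the Hankel determinant is asymptotically nonzero \emph{in the interior of} $P_N$; that would make $P_N$ root-free, whereas the theorem (and the figures) require the roots to accumulate \emph{inside} $P_N$. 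The correct statement is the opposite: the genus-zero inequalities hold in the unbounded exterior region (this is what one verifies for $|x|$ large), the roots are thereby excluded from that exterior, and $P_N$ is the compact set where the genus-zero ansatz \emph{fails} --- the genus jumps to $2,4,\dots,2N$ there, which is also why $x=0\in P_N$ cannot be ``checked from the saddles of $\varphi$'': at $x=0$ the relevant ansatz has maximal genus $2N$ and a separate construction (Proposition \ref{propgmax}) is required. Relatedly, your characterization of the good region by ``$\Re\varphi(z_k^{[N]})<0$ for every $k$'' is not correct: already for $N=2$ one pair of saddles lies identically on the zero level set for all real $x$ outside $P_2$ without spoiling the analysis, because it sits on a harmless branch of the level set joining $\pm\im a$ to the origin. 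All one can assert is that a \emph{transition} requires some saddle to cross the zero level set, which yields only the inclusion $\pa P_N\subset\mathfrak Z_N$ --- exactly the (deliberately weaker) statement of the theorem.

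The second substantive problem is your description of the orthogonality. The moments $\mu_k^{[N]}$ are residues at the origin, i.e.\ integrals over a closed Jordan curve encircling $w=0$, and they are polynomials in $x$. They are \emph{not} the moments of a functional supported on an open contour escaping to infinity in the decay sectors of $\exp(xw-w^{2N+1}/(2N+1))$; such a functional produces higher Airy-type transcendental moments and a different Hankel determinant. In the relevant variable the weight is ${\rm e}^{-n\vartheta(z;x,N)}$ with $\vartheta(z;x,N)=\frac{1}{(2N+1)z^{2N+1}}-\frac{x}{z}$, which has an essential singularity at $z=0$ and no decay sectors at infinity; the jump contour must remain a closed loop around the origin, and the entire Stokes geometry is organized around $z=0$: the sign of $\Re\varphi$ alternates $2N+1$ times around the origin, and \emph{no} critical trajectory can escape to infinity, since $z=\infty$ is the center of an annular domain foliated by closed level curves. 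Consequently your final checklist --- ``verify that the critical trajectories emanating from $\pm\im a$ reach infinity in the prescribed decay sectors'' --- describes a Stokes topology that does not occur in this problem; the trajectories either connect branch points to one another or run into the origin, and the inequalities to be verified are that $\Re\varphi<0$ on both sides of the cut $\mathcal B$ while the loop around the origin can be deformed to pass through $\{\Re\varphi>0\}$ away from $\mathcal B$.
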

The branches of the real--analytic curves specified by $\mathfrak Z_N$ of Theorem \ref{thm:boundary} are plotted as the arcs in Figure \ref{Stars}. Perhaps more important than what Theorem \ref{thm:boundary} above says, is what it does not say. In fact of all the branches of curves defined by \eqref{boundaries} we are not able to effectively discern which ones actually form the boundary of $P_N$. In particular we cannot conclude in general that the points \eqref{starcorners} belong to $\pa P_N$. 
\br
A local analysis (which we do not propose here but is essentially identical to \cite{BB}) shows that the angles between consecutive  arcs emanating from the points $x_{k}^{[N]}$ \eqref{starcorners}  is $\frac {2\pi}5$.
\er

\subsection{The roots inside $P_N$}
Inspection of Figure \ref{Stars} clearly shows that the pattern of roots within $P_N$ is subdivided in subregions.
This can be easily {\em qualitatively} understood in terms of the steepest descent analysis; the so--called $\gg$-function of the problem (see Section \ref{character} below) is an Abelian integral on a Riemann surface of genus $0$ on the outside of $P_N$ and of genera $2,4,\dots$ inside. In fact we can show that $x=0$ belongs to a region where the genus is $2N$ and thus it is reasonable to deduce that there are nested regions of higher and higher genus, until the maximum is reached ($2N$). These regions are quite evident in Figure \ref{Stars}. 
In principle the boundaries between these nested regions could be described as well in terms of Abelian integrals, but it is beyond the scope of this paper to attempt any such detailed description. 

\subsection{Outline of paper} We conclude the introduction with a short outline of the upcoming sections. First Theorem \ref{small} is derived in Section \ref{PIIhier} by referring to the KdV and mKdV hierarchies for which we construct a rational tau function in terms of Schur polynomials. Subsequently an explicit scaling reduction brings us back to the Painlev\'e II hierarchy and Theorem \ref{small} follows. After that we turn towards Theorem \ref{Fesq}, but opposed to the proof of \eqref{the1} in \cite{BB} which relied on \eqref{i:0}, identity \eqref{Fesqid} will follow from Schur function identities and Theorem \ref{small}. In the final Section \ref{character} we follow largely the logic outlined in \cite{BB}. However we choose not to present any details on the nonlinear steepest descent analysis for the underlying orthogonal polynomials. Once the correct inequalities for the $\mathfrak{g}$-function have been verified the asymptotic analysis outside of $P_N$ is almost identical to \cite{BB}, see Section \ref{character} for further details.
\subsection{Acknowledgments} 
All authors are grateful to P. Clarkson for useful discussions about this project. M.B. is supported
in part by the Natural Sciences and Engineering Research Council of Canada. F. B. and M. B. are supported in part by the Fonds de recherche Nature et technology du Qu\'ebec. T.B. acknowledges hospitality of SISSA, Trieste in February 2015.
Early stages of the manuscript were carried out while F.B. was a Postdoctoral fellow at SISSA.
\section{Short reminder about the (m)KdV  and Painlev\'e\ II hierarchies}
\label{PIIhier}
The goal of this section is to remind the reader very briefly of the construction of the Painlev\'e\ II hierarchy as a scaling reduction of the modified Korteweg-de Vries (mKdV)  hierarchy, cf. \cite{CM}. In doing so we will en route derive Theorem \ref{small}. 
\subsection{The KdV hierarchy} 
The KdV hierarchy involves the Lenard recursion operator 
\begin{equation}\label{Lenard}
	\frac{\pa}{\pa x} \L_{n+1}[u] = \le(\frac{\pa^3 }{\pa x^3 }+ 4u(x)\frac{\pa}{\pa x} + 2u_x(x)\ri) \L_n[u],\ \ \ \L_0[u]=\frac{1}{2},\ \ \ \L_n[0]=0
\end{equation}
and its equations are written as the partial differential equations
\begin{equation}\label{KdV}
	\frac{\pa u}{\pa t_{2n+1}} = \frac \pa{\pa x} \L_{n+1}[u],\ \ \ n\in\mathbb{Z}_{\geq 0};\ \ \ u=u(\bt_o),\ \ \bt_o=(t_1,0,t_3,0,t_5,\ldots).
\end{equation}
It is customary, and we will adhere to the custom, to denote the variable $t_1$ by $x$ since $\L_1[u] = u$ and hence the first member of the hierarchy above reads simply $\pa_{t_1} u = u_x$. In general, the equations of the hierarchy should be viewed as an infinite set of compatible evolution equations for a single function $u = u(x)$.  A {\em solution} of the hierarchy is then a function $u(x;t_3,t_5,\dots)$.\smallskip

\bd A function $\tau_{_{\textnormal{KdV}}}=\tau_{_{\textnormal{KdV}}}(\bt_o)$ is called a tau function for the KdV hierarchy \eqref{KdV} if the function 
\begin{equation*}
	u(\bt_o) = 2 \frac {\pa^2} {\pa x^2} \ln \tau_{_{\textnormal{KdV}}}(\bt_o)\ ,\ \ \ x \equiv t_1.
\end{equation*}
solves the hierarchy \eqref{KdV}. 
We note that multiplication by an arbitrary constant (in $x$) of $\tau_{_{\textnormal{KdV}}}$ gives another tau function. 
\ed
\subsection{Rational solutions to KdV and staircase Schur polynomials} The solutions of the KdV equation rational in $x$ for all values of $t=t_3$ (and for all higher times $t_5, t_7,\dots$) and vanishing at $x=\infty$ were completely characterized in \cite{AMM}; they all belong to the countable union of orbits flowing out of initial data of the form 
\be
u_n(x,0,0,\dots)= \frac{n(n+1)}{x^2}\ ,\ \ \ n\in \mathbb{Z}_{\geq 0}\ .
\ee
The corresponding tau functions $\tau_n(\bt_o)$ were obtained explicitly in \cite{AM} in terms of Wronskians of certain polynomials in $\bt_0$. Up to normalization and re-parametrization these Wronskians coincide with Schur polynomials associated to staircase partitions evaluated at the odd times, namely
\be
\label{adler_moser}
\tau_n(\bt_o) = s_{\delta_n}(t_1,0,2^2t_3,0,2^4 t_5,\dots)\ ,\ \ \ u_n(\bt_o) = 2 \frac {\pa^2} {\pa x^2} \ln \tau_n(\bt_o)\ ,
\ee
where $\delta_n$ denotes the staircase partition \eqref{stair} of length $n\in\mathbb{Z}_{\geq 1}$.  Moreover, it can be shown (cf.~\cite{SW}) that these are the only Schur polynomials
that give KdV tau functions when all even times are set to zero.
\br
The particular rescaling $t_{2\ell+1}\mapsto 2^{2\ell}t_{2\ell+1}$ is used in \eqref{adler_moser} in order to correct the normalization so that the coefficients in \eqref{KdV} are as indicated.
\er

\subsection{The mKdV hierarchy}
The {\em modified} KdV (mKdV) hierarchy is defined in terms of a new dependent variable $v = v(\bt_o)$ which is related to $u$ via the Miura transformation 
\begin{equation}\label{Miura}
	u = \mp v_x - v^2,
\end{equation}
where the choice of signs is arbitrary. More is true: if $v$ satisfies $u = -v_x -v^2$,  then the new function $\wh u= v_x -v^2$ is a {\em different} solution of the KdV hierarchy (and vice versa); this is an example of a B\"acklund transformation. Inserting \eqref{Miura} into \eqref{KdV} yields a new set of evolution equations 
\begin{align*}
\frac \pa{\pa t_{2n+1}} \le(\mp v_x - v^2\ri) 
&= \frac \pa{\pa x} \L_{n+1}\big[\mp v_x - v^2\big] 
\stackrel{\eqref{Lenard}}{=} \le(\frac{\pa^3 }{\pa x^3 }- 4(\pm v_x+v^2) \frac{\pa}{\pa x} - 2(\pm v_{xx}+2vv_x)\ri) \L_n\big[\mp v_x-v^2\big]\\
&=\le(\frac \pa{\pa x}  \pm  2v\ri)  \frac \pa{\pa x}\le(\frac \pa{\pa x}  \mp 2v\ri) \L_n\big[\mp v_x-v^2\big].
\end{align*}
This can be rewritten as follows
\begin{equation*}
	\le(\mp\frac \pa{\pa x}  -  2v\ri)  \frac{\pa v} {\pa t_{2n+1}} =\le(\frac \pa{\pa x}\pm 2v\ri)  \frac \pa{\pa x}\le(\frac \pa{\pa x}\mp 2v\ri)  \L_n\big[\mp v_x - v^2\big]
\end{equation*}
or equivalently 
\begin{equation}\label{weakmKdV}
\le(\mp\frac \pa{\pa x}  - 2v\ri) \Bigg\{\underbrace{ \frac{\pa v} {\pa t_{2n+1}} 
-\frac \pa{\pa x}\le( \mp \frac \pa{\pa x}   +  2v\ri) \L_n\big[ \mp v_x-v^2\big]}_{\mathfrak Q_{n}^{(\pm)}[v]}
\Bigg\}  =0.
\end{equation}
We now notice that the two expressions 
\begin{equation}\label{fact}
	\le( \mp \frac \pa{\pa x}   +  2v\ri) \L_n\big[ \mp v_x-v^2\big]=\frac{1}{2}
	\le( \mp  \frac \pa{\pa x}   +  2v\ri)   \le[\int \d x \le(\frac \pa{\pa x}\pm 2v\ri)  \frac \pa{\pa x}\le(\frac \pa{\pa x} \mp 2v\ri)\ri]^{n} 
\end{equation}
define {\em the same} differential polynomial in $v$ since the right hand side is clearly invariant under the map $x \mapsto -x$.  Thus we can simply write 
\begin{equation*}
	\mathfrak Q_{n}^{(+)}[v]= \mathfrak Q_{n}^{(-)}[v]=\mathfrak Q_{n}[v],
\end{equation*}	
omitting the reference to the choice of sign. We now want to conclude that the expression $\mathfrak Q_{n}[v]$ vanishes identically;  the two equations in \eqref{mKdVpm} below are simply stating that $F(x)= \mathfrak Q_{n}[v]$ is a  joint solution of the two ordinary differential equations $(\pm \pa_x + 2v)F(x)=0$. Thus $\mathfrak Q_{n}[v]$ should be the in the null-space of both equations $\pm \pa_x + 2v$; as long as $v$ is not identically zero (which is an un-interesting situation), the only function in both null-spaces is the null function and hence $\mathfrak Q_{n}[v]\equiv 0$. Thus we have concluded that if $u$ is a  solution of the KdV hierarchy \eqref{KdV} and $v$ is related to $u$ by \eqref{Miura}, then $v$ must solve the hierarchy of equations indicated below and named  {\em mKdV hierarchy},
\begin{equation}\label{mKdVpm}
 	\frac{\pa v}{\pa t_{2n+1}} = \frac \pa{\pa x}\le( \mp \frac \pa{\pa x}   +  2v\ri) \L_n\big[ \mp v_x-v^2\big],\ \ n\in\mathbb{Z}_{\geq 0};\ \ \ \ v=v(\bt_o).
\end{equation}
The choice of signs is irrelevant, since the right hand side (as noted above) yields the same differential polynomial in $v$.

\subsection{Schur functions and Painlev\'e II hierarchy} Let us now return to our special situation for which we fix
\begin{equation*}
	 t_1=x,\ \ \ \ t_{2N+1}=-\frac{1}{2N+1},\ \ \ \ \un t=(t_3,t_5,\ldots,t_{2N-1})\in\mathbb{C}^{N-1},\ \ \ \ t_{2j+1}=0,\ \ j>N.
\end{equation*}	
\bp\label{propMiura} For $n,N\in\mathbb{Z}_{\geq 1}$ define the two functions
\begin{equation}\label{gW}
	g_n(x;\un t)=\ln s_{\delta_n}\big(x,0,2^2t_3,0,2^4t_5,\ldots,2^{2N}t_{2N+1},0,0,0,\ldots),\ \ \ \ \ W_n(x;\un t)=g_{n+1}(x;\un t)-g_n(x;\un t)
\end{equation}
with some fixed branch for the logarithm. We then have the Miura relation
\begin{equation}\label{Miura2}
	2\,\partial_x^2\,g_n(x;\un t)=-\partial_x^2 W_n(x;\un t)-\big(\partial_x W_n(x;\un t)\big)^2.
\end{equation}
\ep
A proof of \eqref{Miura2} can be found in Appendix \ref{appmiura}. In view of Proposition \ref{propMiura} we note that the two functions 
\begin{equation}\label{uvdef}
	u(\bt_o)=2\,\partial_x^2\,g_n(x;\un t),\ \ \ \ \ v(\bt_o)=\partial_xW_n(x;\un t)
\end{equation}
satisfy precisely 
the Miura relation \eqref{Miura} with the choice of the minus sign, namely $u  = - v '- v^2$. Since $s_{\delta_n}$ gives a tau function for the KdV hierarchy it follows that $v$ satisfies the hierarchy \eqref{mKdVpm} for $n=0,\ldots,N$. Summarizing
\bp The function
\begin{equation*}
	w(x;\un t)=-\partial_xW_n(x;\un t)
\end{equation*}
satisfies the mKdV hierarchy in the form
\begin{equation}\label{mKdV}
	\frac{\partial w}{\partial t_{2n+1}}=\frac{\partial}{\partial x}\left(\frac{\partial}{\partial x}+2w\right)\mathcal{L}_n\big[w_x-w^2\big],\ \ 0\leq n\leq N.
\end{equation}
\ep
Recalling the homogeneity property \eqref{homogSchur} we see that $w(x;\un t)$ obeys a simple scaling invariance which will allow us to reduce the partial differential equations \eqref{mKdV} to an ordinary differential equation; we carry out a {\it scaling reduction}: 
\begin{enumerate}
	\item[(i)] View $w=-v(\bt_o)$ as a function in the variables $t_1=x,\un t=(t_3,t_5,\ldots,t_{2N-1})\in\mathbb{C}^{N-1}$ and $t_{2N+1}$.
	\item[(ii)] By homogeneity \eqref{homogSchur}, it follows that $w=w(t_1,\un t,t_{2N+1})$ is a function of the form
	\begin{equation}\label{red:1}
		w=\big(-(2N+1)t_{2N+1}\big)^{-\frac{1}{2N+1}}V(T_1,T_3,\ldots,T_{2N-1}),
	\end{equation}
	and $V$ depends on the ``new" variables
	\begin{equation*}
		T_{2k+1}=\frac{t_{2k+1}}{(-(2N+1)t_{2N+1})^{\frac{2k+1}{2N+1}}},\ \ \ \ k=0,\ldots,N-1.
	\end{equation*}
	\item[(iii)] Substituting \eqref{red:1} into the left hand side of \eqref{mKdV} with $n=N$, we find
	\begin{equation}\label{red:2}
		(2N+1)t_{2N+1}\frac{\partial w}{\partial t_{2N+1}}=-\big(-(2N+1)t_{2N+1}\big)^{-\frac{1}{2N+1}}\left[V+\sum_{j=0}^{N-1}(2j+1)T_{2j+1}\frac{\partial V}{\partial T_{2j+1}}\right].
	\end{equation}
	\item[(iv)] Next we evaluate \eqref{red:2},\eqref{red:1} at $t_{2N+1}=-\frac{1}{2N+1}$ and compare the result to \eqref{mKdV},
	\begin{equation}\label{red:3}
		\frac{\partial}{\partial x}\left(\frac{\partial}{\partial x}+2V\right)\L_N\big[V_x-V^2\big]=V+\sum_{j=0}^{N-1}(2j+1)t_{2j+1}\frac{\partial V}{\partial t_{2j+1}}.
	\end{equation}
	\item[(v)] Since $t_1=x$ and $V+x\frac{\partial V}{\partial x} = \frac{\partial}{\partial x}(xV)$, \eqref{red:3} can be rewritten with the help of \eqref{mKdV},
	\begin{equation}\label{red:4}
		\frac{\partial}{\partial x}\left\{\left(\frac{\partial }{\partial x}+2V\right)\L_N\big[V_x-V^2\big]-xV-\sum_{j=1}^{N-1}(2j+1)t_{2j+1}\left(\frac{\partial}{\partial x}+2V\right)\L_j\big[V_x-V^2\big]\right\}=0.
	\end{equation}
\end{enumerate}
Equation \eqref{red:4} is an ordinary differential equation for the function $V=w(x;\un t)$ in which $\un t\in\mathbb{C}^{N-1}$ appear as {\it parameters}. Since
\begin{equation*}
	w(x;\un t)=\partial_x\big(g_n(x;\un t)-g_{n+1}(x;\un t)\big)=-\frac{n+1}{x}+\mathcal{O}\left(x^{-2}\right),\ \ x\rightarrow\infty,
\end{equation*}
integration in \eqref{red:4} yields \eqref{genPIIhier} with $\alpha_N=n+1$. Recall \cite{DK1,DK2} that $\alpha_N\in \Z$ is necessary to have a rational solution to \eqref{genPIIhier} and for all 
integer values of $\alpha_N$ there exists a unique rational solution which can be obtained from the trivial solution for $\alpha_N=0$ by B\"acklund transformations. Therefore we have the following

\begin{theo} For $n,N\in\mathbb{Z}_{\geq 1}$ the unique rational solution of the Painlev\'e II hierarchy \eqref{genPIIhier} is
\begin{equation}\label{red:5}
	u(x;n+1,\un t,N)=\frac{\d}{\d x}\ln\frac{s_{\delta_n}}{s_{\delta_{n+1}}}\big(x,0,2^2t_3,0,2^4t_5,\ldots,2^{2N}t_{2N+1},0,0,0,\ldots\big),\ \ \ t_{2N+1}=-\frac{1}{2N+1}
\end{equation}
and we have the identity
\begin{equation*}
	\mathcal{Q}_n^{[N]}(x;\un t)=\prod_{k=1}^n\frac{(2k)!}{2^kk!}s_{\delta_n}\big(x,0,2^2t_3,0,2^4t_5,\ldots,2^{2N}t_{2N+1},0,0,0,\ldots),\ \ x\in\mathbb{C}.
\end{equation*}
\end{theo}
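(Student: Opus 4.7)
The plan is to assemble the scaling reduction carried out in steps (i)--(v) with the uniqueness of rational solutions for the generalized Painlev\'e II hierarchy proven in \cite{DK1,DK2}. First I would observe that the function $V=w(x;\un t)=-\partial_x W_n(x;\un t)$ constructed via Proposition \ref{propMiura} and \eqref{uvdef} is manifestly rational in $x$, since it is the logarithmic derivative of the polynomial ratio $s_{\delta_n}/s_{\delta_{n+1}}$ evaluated at the odd times. The content of \eqref{red:4} is that this rational function satisfies the $x$-derivative of \eqref{genPIIhier}; integrating once in $x$ recovers \eqref{genPIIhier} itself, modulo one integration constant which is precisely the parameter $\alpha_N$.

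The next step would be to identify $\alpha_N=n+1$ using the asymptotic expansion recorded just before the theorem,
\begin{equation*}
w(x;\un t)=-\frac{n+1}{x}+\mathcal{O}(x^{-2}),\qquad x\to\infty.
\end{equation*}
Since $V_x-V^2=\mathcal{O}(x^{-2})$ and every Lenard polynomial $\L_k$ is a differential polynomial in its input that vanishes when the input vanishes, the bracketed terms involving $\L_N$ and $\L_k$ in \eqref{genPIIhier} decay at infinity. Only the term $xV=-(n+1)+\mathcal{O}(x^{-1})$ contributes at order $\mathcal{O}(1)$, which must cancel $\alpha_N$ in the balance, forcing $\alpha_N=n+1$.

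With this identification in hand, the Demina--Kudryashov uniqueness result guarantees that $w(x;\un t)=u(x;n+1,\un t,N)=\partial_x\ln(\mathcal{Q}_n^{[N]}/\mathcal{Q}_{n+1}^{[N]})$, which is the first displayed identity of the theorem. A single further integration in $x$ then yields
\begin{equation*}
\frac{\mathcal{Q}_n^{[N]}(x;\un t)}{\mathcal{Q}_{n+1}^{[N]}(x;\un t)}=K_n\,\frac{s_{\delta_n}(x,0,2^2t_3,\ldots)}{s_{\delta_{n+1}}(x,0,2^2t_3,\ldots)}
\end{equation*}
for some constant $K_n$ independent of $x$. To promote this ratio identity to the claimed product formula, I would induct on $n$ starting from $\mathcal{Q}_0^{[N]}=s_{\delta_0}=1$ and $\mathcal{Q}_1^{[N]}=s_{\delta_1}=x$; the inductive step rewrites $K_n$ as the ratio $c_n/c_{n+1}$, so $c_{n+1}$ is determined by matching leading coefficients in $x$. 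An elementary induction on \eqref{diffrel} shows $\mathcal{Q}_n^{[N]}(x;\un t)$ is monic of degree $n(n+1)/2$, while Jacobi-Trudi with $h_k(x,0,0,\ldots)=x^k/k!$ evaluates the leading coefficient of $s_{\delta_n}(x,0,\ldots)$ to $1/\prod_{k=1}^n(2k-1)!!$. Since $(2k)!/(2^kk!)=(2k-1)!!$, this forces $c_n=\prod_{k=1}^n(2k)!/(2^kk!)$ and completes the argument.

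The main obstacle I anticipate is the asymptotic bookkeeping that isolates $\alpha_N=n+1$, since it requires careful control of all terms in \eqref{red:4} at infinity. Once this is done, the remainder amounts to a Jacobi-Trudi leading-coefficient computation and a monicity induction combined with the uniqueness from \cite{DK1,DK2}.
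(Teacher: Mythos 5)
Your proposal is correct and follows essentially the same route as the paper: the scaling reduction shows the Schur-ratio logarithmic derivative is a rational solution of the hierarchy, the asymptotics $w=-\frac{n+1}{x}+\mathcal{O}(x^{-2})$ fix $\alpha_N=n+1$, uniqueness from \cite{DK1,DK2} identifies it with $\partial_x\ln(\mathcal{Q}_n^{[N]}/\mathcal{Q}_{n+1}^{[N]})$, and the constant is pinned down by monicity and the leading coefficient of $s_{\delta_n}$. The only differences are expository: you make the induction from the ratio identity to the product formula explicit and compute the leading coefficient via $(2k-1)!!$ rather than citing the hook-length formula, both of which match the paper's computation.
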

\begin{proof} It is easy to see that the LHS of \eqref{red:5} is a rational solution to \eqref{genPIIhier} by the scaling reduction \eqref{red:1}-\eqref{red:4}. By the uniqueness of the rational solutions of the Painlev\'e II hierarchy we have 
\begin{equation*}
	\mathcal{Q}_n^{[N]}(x;\un t) = c_{n,N}(\un t)s_{\delta_n}\big(x,0,2^2t_3,0,2^4t_5,\ldots,2^{2N}t_{2N+1},0,0,0,\ldots\big),
\end{equation*}
with an $x$-independent factor $c_{n,N}(\un t)$. However, to leading order,
\begin{equation*}
	s_{\delta_n}\big(x,0,2^2t_3,0,2^4t_5,\ldots,2^{2N}t_{2N+1},0,0,0,\ldots\big)\sim s_{\delta_n}(x,0,0,0,\ldots)=\frac{x^{|\delta_n|}}{h(\delta_n)},\ \ x\rightarrow\infty,
\end{equation*}
where $h(\lambda)$ denotes the product of the hook-lengths of $\lambda$ (cf. \cite{M}). Since
\begin{equation*}
	|\delta_n|=\frac{n}{2}(n+1),\ \ \ \ \ \ h(\delta_n)=\prod_{k=1}^n\frac{2^kk!}{(2k)!}
\end{equation*}
and $\mathcal{Q}_n^{[N]}(x;\un t)$ is a monic polynomial of degree $\frac{n}{2}(n+1)$, the claim follows.
\end{proof}
\section{Proof of Theorem \ref{Fesq}}
We will appeal to certain identities satisfied by  symmetric functions which can be found, for instance, in \cite{M}. First let us start with the following lemma.
\begin{lemma}\label{lemmadouble}
The symmetric polynomial identity
\begin{equation*}
	s_{\delta_n}^2(2^0t_1,0,2^2t_3,0,2^4t_5,0,\ldots)=
2^{-n} s_{(n+1)^n} \left(2t_1,0,2^3t_3,0,2^5t_5,\dots\ri) = 
 {2^{n^2}}s_{(n+1)^{n}}\left(t_1,0,t_3,0,t_5,0,t_7,\dots\right)
\end{equation*}
holds, where $\lambda=(n+1)^{n}$ stands for the rectangular partition with $n+1$ rows of length $n$ and $\delta_n$ is the staircase partition \eqref{stair}. 
\end{lemma}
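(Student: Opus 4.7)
The statement contains two equalities. The rightmost is a direct consequence of the homogeneity property \eqref{homogSchur}: taking $\epsilon=1/2$ and using $|(n+1)^n|=n(n+1)$, one has $s_{(n+1)^n}(2t_1,0,2^3t_3,0,\ldots)=2^{n(n+1)}s_{(n+1)^n}(t_1,0,t_3,0,\ldots)$, so the prefactor $2^{-n}$ combines with $2^{n(n+1)}$ to produce $2^{n^2}$. For the central equality, a routine entry-wise rescaling of the odd-indexed arguments reduces it to the equivalent clean statement
\begin{equation*}
s_{(n+1)^n}(2\bt_o)=2^n\,s_{\delta_n}(\bt_o)^2,\qquad \bt_o:=(t_1,0,t_3,0,t_5,\ldots),
\end{equation*}
which is the substantive content of the lemma.

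My plan is to derive this identity from the Schur coproduct in $\Lambda$: the comultiplication $\Delta s_\lambda=\sum_\mu s_{\lambda/\mu}\otimes s_\mu$, evaluated at $\bt_o+\bt_o=2\bt_o$ (with the addition understood entry-wise, as in the additivity $p_k(\bt_o+\bt_o)=2\,p_k(\bt_o)$ of the power sums), yields $s_\lambda(2\bt_o)=\sum_\mu s_{\lambda/\mu}(\bt_o)s_\mu(\bt_o)$. Applied with $\lambda=R:=(n+1)^n$, and combined with the classical rectangular complement identity $s_{R/\nu}=s_{\nu^*}$ (where $\nu^*$ is the $180^\circ$-rotated complement of $\nu$ inside $R$), this produces the symmetric expansion
\begin{equation*}
s_{(n+1)^n}(2\bt_o)=\sum_{\nu\subseteq R}s_\nu(\bt_o)\,s_{\nu^*}(\bt_o).
\end{equation*}
The task then reduces to showing this sum equals $2^n s_{\delta_n}(\bt_o)^2$. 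The key structural input is the involution $\omega$ of symmetric functions, satisfying $\omega(p_k)=(-1)^{k-1}p_k$ and therefore acting as the identity on the quotient $\Lambda^{\mathrm{odd}}=\Lambda/(p_2,p_4,\ldots)$; hence $s_\nu(\bt_o)=s_{\nu'}(\bt_o)$ for every partition $\nu$. Combining this $\omega$-invariance with the self-duality $\delta_n=\delta_n'=\delta_n^*$ of the staircase (self-conjugate and self-complementary in $R$) and the natural pairing $\nu\leftrightarrow\nu^*$ of summands, I would regroup the sum to extract both the distinguished contribution $s_{\delta_n}(\bt_o)^2$ and the overall multiplier $2^n$.

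The principal obstacle is precisely this combinatorial collapse. At $n=2$, for instance, a direct computation shows that $\sum_{\nu\subseteq(2,2,2)}s_\nu(\bt_o)s_{\nu^*}(\bt_o)$ decomposes into six nontrivial contributions of which $s_{\delta_2}(\bt_o)^2$ is only one, so the argument cannot be executed term by term and genuinely requires the global symmetries above. As a consistency check on the final multiplier $2^n$, one may specialize $t_3=t_5=\cdots=0$: using the one-variable formula $s_\lambda(t_1,0,0,\ldots)=t_1^{|\lambda|}/h(\lambda)$, the identity reduces to the combinatorial hook-length relation $h((n+1)^n)=2^{n^2}h(\delta_n)^2$, which follows from the explicit products
\begin{equation*}
h(\delta_n)=\prod_{k=0}^{n-1}(2k+1)^{n-k},\qquad h((n+1)^n)=\prod_{h=1}^{n}h^h\prod_{h=n+1}^{2n}h^{\,2n+1-h}.
\end{equation*}
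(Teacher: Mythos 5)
Your reductions at the two ends are fine: the homogeneity step for the third member is correct, and the coproduct plus rectangular-complement expansion $s_{R}(2\bt_o)=\sum_{\nu\subseteq R}s_\nu(\bt_o)\,s_{\nu^*}(\bt_o)$ is a valid reformulation of the middle equality. But the proof stops exactly where the content of the lemma begins. You state that you ``would regroup the sum to extract both the distinguished contribution $s_{\delta_n}(\bt_o)^2$ and the overall multiplier $2^n$,'' yet no such regrouping is exhibited, and the symmetries you invoke do not produce one. The $\omega$-invariance $s_\nu(\bt_o)=s_{\nu'}(\bt_o)$ and the pairing $\nu\leftrightarrow\nu^*$ only organize the ten summands for $n=2$ into the four doubled pairs and the two self-complementary terms $s_{(2,1)}^2$ and $s_{(1,1,1)}^2$; nothing in that bookkeeping explains why, say, $2\,s_{(1)}(\bt_o)s_{(2,2,1)}(\bt_o)+2\,s_{(2)}(\bt_o)s_{(2,2)}(\bt_o)+\cdots$ should collapse onto a multiple of $s_{(2,1)}(\bt_o)^2$. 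Your own $n=2$ computation is in effect an admission that the key identity $\sum_{\nu\subseteq R}s_\nu(\bt_o)s_{\nu^*}(\bt_o)=2^n s_{\delta_n}(\bt_o)^2$ remains unproven; the hook-length check at $t_3=t_5=\cdots=0$ verifies only the top-degree coefficient and cannot substitute for it. As written, this is a plan with a hole at its center, not a proof.

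For comparison, the paper does not attempt any such combinatorial collapse. It passes through Schur $Q$-functions: at odd times the staircase Schur polynomial coincides with the projective Schur polynomial, $s_{\delta_n}(t_1,0,t_3,0,\ldots)=P_{\delta_n}(t_1,t_3,\ldots)$, and a theorem of You gives, for any strict partition $\lambda$, $2^{\ell(\lambda)}P_\lambda^2\bigl(\tfrac{t_1}{2},\tfrac{t_3}{2},\ldots\bigr)=s_{\bar\lambda}(t_1,0,t_3,\ldots)$ with $\bar\lambda$ the double of $\lambda$; since the double of $\delta_n$ is the rectangle $(n+1)^n$, the lemma follows in two lines plus homogeneity. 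If you want to salvage your route, you would essentially have to reprove You's theorem in this special case, which is a substantially harder task than the lemma itself; the cleaner fix is to import the $P$-function identities directly.
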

\begin{proof} The Schur polynomial $s_{\delta_n}$ can be written in terms of the projective Schur polynomial $P_{\delta_n}$ labeled by the same partition
\begin{equation}\label{p:1}
	s_{\delta_n}(t_1,0,t_3,0,t_5,0,\ldots)=P_{\delta_n}(t_1,t_3,t_5,\ldots).
\end{equation}
For a proof of this identity see \cite{M}, $\S$ 3.8, example $3$, page $259$, and also \cite{DW}, Lemma V.4. Second, for a strict partition $\lambda$, i.e. $\lambda_1>\lambda_2>\ldots>\lambda_{\ell(\lambda)}$, we have \cite{Y}, Theorem $4$,
\begin{equation}\label{p:2}
	2^{\ell(\lambda)}P_{\lambda}^2\left(\frac{t_1}{2},\frac{t_3}{2},\frac{t_5}{2},\ldots\right)=s_{\bar{\lambda}}(t_1,0,t_3,0,t_5,\ldots),
\end{equation}
with $\bar{\lambda}$ denoting the double of the partition $\lambda$ which is defined via its Frobenius characteristics,
\begin{equation*}
	\bar{\lambda}=\big(\lambda_1,\lambda_2,\ldots,\lambda_{\ell(\lambda)}\big|\lambda_1-1,\lambda_2-1,\ldots,\lambda_{\ell(\lambda)}-1\big).
\end{equation*}
Combining \eqref{p:1} and \eqref{p:2},
\begin{eqnarray*}
	s_{\delta_n}^2(2^0t_1,0,2^2t_3,0,2^4t_5,0,\ldots) &=&P_{\delta_n}^2(2^0t_1,2^2t_3,2^4t_5,\ldots)=2^{-n}s_{(n+1)^n}(2t_1,0,2^3t_3,0,2^5t_5,\ldots)\\
	&=&2^{n^2}s_{(n+1)^n}(t_1,0,t_3,0,t_5,\ldots)
\end{eqnarray*}
where we have used homogeneity \eqref{homogSchur} in the last step. This concludes the proof.
\end{proof} 
We are now ready to derive Theorem \ref{Fesq} by referring to \eqref{Schurid} and Lemma \ref{lemmadouble}.
\begin{proof}[Proof of Theorem \ref{Fesq}] Let $\un t=(t_3,t_5,\ldots,t_{2N-1})\in\mathbb{C}^{N-1}$ and
\begin{equation*}
	\bt_o=(t_1,0,t_3,0,t_5,\ldots,t_{2N-1},0,t_{2N+1},0,0,0,\ldots),\ \ \ \ t_1=x,\ \ t_{2N+1}=-\frac{1}{2N+1}.
\end{equation*}
This gives us
\begin{eqnarray*}
	\big(\mathcal{Q}_n^{[N]}(x;\un t)\big)^2&\stackrel{\eqref{Schurid}}{=}&\prod_{k=1}^n\left[\frac{(2k)!}{2^kk!}\right]^2\,s_{\delta_n}^2\big(2^0t_1,0,2^2t_3,0,2^4t_5,\ldots,2^{2N}t_{2N+1},0,0,0,\ldots)\\
	&=&\frac{1}{2^n}\prod_{k=1}^n\left[\frac{(2k)!}{k!}\right]^2s_{(n+1)^n}(t_1,0,t_3,0,t_5,\ldots,t_{2N+1},0,0,0,\ldots)\\
	&=&\frac{1}{2^n}\prod_{k=1}^n\left[\frac{(2k)!}{k!}\right]^2s_{(n)^{n+1}}(\bt_o)=\frac{1}{2^n}\prod_{k=1}^n\left[\frac{(2k)!}{k!}\right]^2\det\big[\mu_{n-\ell+j}^{[N]}(\bt_o)\big]_{\ell,j=1}^{n+1}\\
	&=&(-1)^{\binom{n+1}{2}}\frac{1}{2^n}\prod_{k=1}^n\left[\frac{(2k)!}{k!}\right]^2\det\big[\mu_{\ell+j-2}^{[N]}(\bt_o)\big]_{\ell,j=1}^{n+1},
\end{eqnarray*}
where we used that for the transposed partition $\lambda'$,
\begin{equation}\label{transpose}
	s_{\lambda'}(t_1,0,t_3,0,t_5,\ldots)=(-1)^{|\lambda|}s_{\lambda}(-t_1,0,-t_3,-t_5,\ldots)\stackrel{\eqref{homogSchur}}{=}s_{\lambda}(t_1,0,t_3,0,t_5,\ldots),
\end{equation}
and that the Schur polynomials of rectangular partitions are Hankel determinants.
\end{proof}

\bc\label{corfreak} Let $\bt_{o}=(t_1,0,t_3,0,t_5,\ldots)$ and $\{h_k(\bt_{o})\}_{k\in\mathbb{Z}_{\geq 0}}$ as in \eqref{hdef}. Introducing the notation
\begin{equation*}
	\Delta_{n,\ell}(\bt_{o}) = \det\big[h_{j+k-2+\ell}(\bt_{o})\big]_{j,k=1}^{n+1},\ \ n,\ell\in\mathbb{Z}_{\geq 0},
\end{equation*}
we have the Hankel determinant identity
\begin{equation}\label{freak}
	\Delta_{n+1,0}(\bt_{o})=(-1)^n\Delta_{n,2}(\bt_{o}).
\end{equation}
\ec
\begin{proof} Note that
\begin{eqnarray*}
	s_{(n+1)^n}(\bt_{o})&=&\det\big[h_{n+1-j+k}(\bt_{o})\big]_{j,k=1}^n=(-1)^{n-1}\det\big[h_{j+k}(\bt_{o})\big]_{j,k=1}^n,\\
	s_{(n)^{n+1}}(\bt_{o})&=&\det\big[h_{n-j+k}(\bt_{o})\big]_{j,k=1}^{n+1}=(-1)^n\det\big[\mu_{j+k-2}(\bt_{o})\big]_{j,k=1}^{n+1},
\end{eqnarray*}
and since $|(n)^{n+1}|=|(n+1)^n|=n(n+1)\equiv 0\mod 2$, the stated identity follows from \eqref{transpose}.
\end{proof}
\br
Identity \eqref{freak} in Corollary \ref{corfreak} {\em does not hold} if any of the even-index times is nonzero. 
\er

\section{Characterization of the set $\Roots_n^{[N]}$}\label{character}
The logic we are following here is identical to \cite{BB}. The square of the polynomials $\mathcal{Q}_n^{[N]}(x)$ is proportional to a Hankel determinant 
\begin{equation*}
	\Delta_{n}(x;N) = \det\big[\mu_{j+k-2}^{[N]}(\bt_o)\big]_{j,k=1}^{n+1}
\end{equation*}
of the moments $\mu_k^{[N]}(\bt_o)$ \eqref{muk}, which can alternatively be written as 
\be
	\mu_k^{[N]}(\bt_o) = \frac{1}{2\pi\im}\oint_{S}z^k{\rm e}^{\frac{x}{z}-\frac{z^{-2N-1}}{2N+1}} \frac{\d z}{z};\ \ \ \bt_o=(x,0,0,\ldots,0,t_{2N+1},0,0,0,\ldots),\ \ t_{2N+1}=-\frac{1}{2N+1}
\ee
where $S\subset\mathbb{C}$ denotes the unit circle traversed in counterclockwise direction. It is then a well-known fact that $\Delta_{n}(x;N)=0$ if and only if the Riemann--Hilbert problem \ref{master} has no solution, or equivalently, if and only if the $n$-th monic orthogonal polynomial for the weight 
\begin{equation*}
	\d\mu_0 (z;x,N)= \frac { {\rm e}^{-\vartheta(z;x,N)} }{2\pi\im z},\ \ \ \ \ \vartheta(z;x,N) = \frac 1{(2N+1)z^{2N+1}}-\frac{x}{z},\ \ \ z\in S
\end{equation*}
does not exist.
In view of the scaling $x\mapsto n^{\frac {2N}{2N+1}}x$ in \eqref{roots} we also perform a scaling $z\mapsto n^{-\frac 1{2N+1}}z$ so that we arrive at the following Riemann--Hilbert problem with a varying exponential weight.
\begin{problem}\label{master} Suppose $\gamma\subset\mathbb{C}$ is a smooth Jordan curve which encircles the origin counterclockwise. Let $\Gamma=\Gamma(z;x,n,N)       $ denote the $2\times 2$ matrix-valued piecewise analytic function which is uniquely characterized by the following three properties.
\begin{enumerate}
	\item $\Gamma(z)$ is analytic for $z\in\mathbb{C}\backslash\gamma$
	\item Given the orientation of $\gamma$, the limiting values $\Gamma_{\pm}(z)$ from the $(+)$ and $(-)$ side of the contour exist and are related via the jump condition
	\begin{equation*}
		\Gamma_+(z)=\Gamma_-(z)\begin{bmatrix}
		1 & w(z;x,N)\\
		0 & 1
		\end{bmatrix},\ \ z\in\gamma;\ \ \ \ w(z;x,N)=\frac{{\rm e}^{-n \vartheta(z;x,N)}}{2\pi\im z}.
	\end{equation*}
	\item The function $\Gamma(z)$ is normalized as $z\rightarrow\infty$,
	\begin{equation*}
		\Gamma(z)=\left(I+\frac{\Gamma_1(x;n,N)}{z}+\mathcal{O}\left(z^{-2}\right)\right)z^{n\sigma_3},\ \ \ \sigma_3=\begin{bmatrix}
		1 & 0\\
		0 & -1
		\end{bmatrix}.
	\end{equation*}
\end{enumerate}
\end{problem}
Then we have, compare \cite{BB},
\bp
The zeros of the scaled Vorob'ev-Yablonski polynomials $Q_n^{[N]}(n^{\frac {2N}{2N+1}} x)$ coincide with the values of $x$ for which the problem \ref{master} is {\em not} solvable.
\ep
In principle an asymptotic analysis of the Problem \ref{master} as $n\to\infty$ is possible using the Deift-Zhou steepest descent analysis \cite{DZ1,DVZ,DKMVZ}, and the zeros will be located asymptotically in terms of appropriate Theta functions as in \cite{BB}. However here we simply want to prove the {\em absence} of zeros outside of a certain compact region $P_N$ and, {\em en route}, give a partial  characterization of the boundary $\pa P_N$.  For a more comprehensive analysis which is only marginally different from the present situation we refer to \cite{BB}; here we shall just remind the reader that the method requires the construction of an appropriate function, called customarily "the $\gg$-function".\smallskip

In case of the problem \ref{master} the $\gg$-function is a priori expressible in the form 
\begin{equation}\label{ansatz}
	\gg(z) = \frac 12 \vartheta(z;x,N) + \int_{z_0} ^z \big(P_{4N+2}(w)\big)^{\frac{1}{2}}\frac{\d w}{w^{2N+2}}+\frac{\ell}{2},\ \ \ \ell=\ell(x;N)\in\mathbb{C},\ \ \ z_0=z_0(x;N)\in\mathbb{C}
\end{equation}
where, in general, $P_{4N+2}(z)$ is an appropriate polynomial of the indicated degree. The ansatz \eqref{ansatz} is explained in the paragraph "Construction of the $\gg$-function'' of \cite{BB} and the discussion there can be applied almost verbatim here. From \eqref{ansatz} we see that the $\gg$-function is an Abelian integral on the Riemann surface of the square root of $P_{4N+2}(z)$; depending on the number of odd roots, this surface has a genus that can range from a minimum of $0$ (if there are only two simple roots in $P_{4N+2}(z)$) to a maximum of $2N$ (if all the roots are simple). Subsequently the Deift-Zhou analysis shows that 
\begin{quote}
If $x$ is such that the genus of the above Riemann surface is zero and the $\gg$-function satisfies the appropriate inequalities (recalled below), then the Riemann-Hilbert problem \ref{master} is eventually {\em solvable} for sufficiently large $n$.
\end{quote}
Therefore our strategy is as follows; 
we postulate a genus zero  Ansatz for the $\gg$-function in \eqref{y1}; the algebraic requirements are easily verified, but the required inequalities are not always verified. We shall then find where the inequalities fail, and hence where the roots are asymptotically confined.\smallskip
 
For the concrete construction of the $\gg$-function in the genus zero region we follow the logic outlined in \cite{BB}. We seek a function $y=y(z),z\in\mathbb{C}\backslash\mathcal{B}$  of the form 
\begin{equation}
	y(z)=\frac{1}{z^{2N+2}}\big(z^2+a^2\big)^{\frac{1}{2}}P(z;a),\ \ \ \ P\in\mathbb{C}[z]\ \ \ \ \textnormal{deg}(P)=2N,
	\label{y1}
\end{equation}
where $(z^2+a^2)^{\frac{1}{2}}$ is defined and analytic off the oriented branch cut $\mathcal{B}=\mathcal{B}(x,N)$ which connects the points $z=\pm \im a$. The precise location of $\mathcal{B}\subset\mathbb{C}$ shall be discussed in Section \ref{bcut} below, for now we require that $y$ satisfies the two conditions
\begin{equation}\label{g:1}
	y(z)=\frac{1}{2}\vartheta_z(z;x,N)+\mathcal{O}(1),\ \ z\rightarrow 0;\ \ \ \ \ y(z)=\frac{1}{z}+\mathcal{O}\left(z^{-2}\right),\ \ z\rightarrow\infty.
\end{equation}
Using simple algebra, we directly obtain
\bp The conditions \eqref{g:1} imply that $a$ and $x$ are related via
\begin{equation}\label{g:2}
	x=2a+\frac{c_N}{a^{2N}},\ \ \ \ c_N=\frac{(-1)^N}{2^{2N}}\binom{2N}{N}
\end{equation}
and the polynomial $P(z;a)$ is uniquely determined as
\begin{equation}\label{g:3}
	P(z;a)=z^{2N}-\frac{1}{2a}T_{N-1,-\frac{1}{2}}\left(\frac{z^2}{a^2}\right),
\end{equation}
where $T_{m,\alpha}(\z)$ is the Maclaurin polynomial of degree $m\in\mathbb{Z}_{\geq 0}$ of the function $(1+\z)^{\alpha} = 1+\mathcal{O}(\z),\z\rightarrow 0$.
\ep
\begin{proof} Observe that the condition $y(z)\sim\frac{1}{z},z\rightarrow\infty$ implies that $P(z;a)$ is monic and from the behavior at $z=0$ we find that
\begin{equation}\label{arrange}
	\frac{1}{2}\frac{1-xz^{2N}}{\sqrt{z^2+a^2}}=P(z;a)+\mathcal{O}\left(z^{2N+2}\right).
\end{equation}
Writing $P(z;a)=z^{2N}+Q(z)$ with a polynomial $Q(z)$ of degree at most $2N-1$ and reading \eqref{arrange} at $\mathcal{O}\left(z^{2N}\right)$, we get
\begin{equation*}
	\frac{1}{2\sqrt{z^2+a^2}}=Q(z)+\mathcal{O}\left(z^{2N}\right),\ \ z\rightarrow 0,
\end{equation*}
and thus
\begin{equation*}
	Q(z)=-\frac{1}{2a}T_{N-1,-\frac{1}{2}}\left(\frac{z^2}{a^2}\right)
\end{equation*}
which gives \eqref{g:3}. In order to deduce \eqref{g:2}, we recall that $P(z;a)$ is monic, i.e. we must have
\begin{equation*}
	x-\frac{1}{a^{2N}}\binom{-\frac{1}{2}}{N}=2a,
\end{equation*}
that is \eqref{g:2}.
\end{proof}
\bp
\label{Propstarcorners}
The branch points of the map $a=a(x)$, defined implicitly by  \eqref{g:2}, coincide with the values of $a(x)$ for which two zeros of $P(z)$ overlap with the branch points $z = \pm\im a$.
\ep
\begin{proof}
We have to evaluate the condition $P(\pm\im a)=0$; using \eqref{g:3} this amounts to 
\begin{align*}
	0=P(\pm\im a;a) = (-1)^N a^{2N} - \frac 1 {2a} \sum_{k=0}^{N-1}\binom{-\frac{1}{2}}{k}(-1)^k&=(-1)^Na^{2N}-(-1)^N\frac{N}{a}\binom{-\frac{1}{2}}{N}\\
&=(-1)^Na^{2N}-\frac{N}{a2^{2N}}\binom{2N}{N}.
\end{align*}
Thus the condition determining the coincidence of a zero of $P(z)$ with $z=\pm\im a$ is 
\be
\label{ss1}
 a^{2N+1}=\frac{(-1)^N N}{  2^{2N}} \le({2N\atop N}\ri) .
\ee
On the other hand the map \eqref{g:2} has a branch point where $x'(a)=0$, which gives exactly \eqref{ss1}.
\end{proof}
\subsection{The complex effective potential and the inequalities.}

For further steps it will prove useful to define the effective potential,
\begin{equation}
\label{ytophi}
	\varphi(z;a)=2\int_{\im a(x)}^zy(w)\d w,\ \ \ \ z\in\mathbb{C}\backslash\mathcal{B}
\end{equation}
which in the given situation \eqref{g:3} can be evaluated explicitly,
\begin{equation}\label{g:4}
	\varphi(z;a)=-2\ln\left(\frac{z+(z^2+a^2)^{\frac{1}{2}}}{\im a}\right)+\frac{2}{z}\big(z^2+a^2\big)^{\frac{1}{2}}-\frac{1}{2N+1}T_{N-1,-\frac{3}{2}}\left(\frac{z^2}{a^2}\right)\frac{(z^2+a^2)^{\frac{3}{2}}}{a^3z^{2N+1}}
\end{equation}
and all branches in \eqref{g:4} are principal ones such that $(z^2+a^2)^{\frac{1}{2}}\sim z$ as $z\rightarrow\infty$.
\bl
\label{Lemmader}
Given \eqref{g:4}, we have for $N\in\mathbb{Z}_{\geq 1}$,
\be
	\frac{\partial\varphi}{\partial x}(z;a) = -  \frac{(z^2 + a^2)^{\frac{1}{2}}}{za},\ \ z\in\mathbb{C}\backslash\mathcal{B}.
\ee
\el
\begin{proof} The jump of $\varphi(z;a)$ equals $4\pi\im$ on a contour that extends to infinity. Hence $\pa_x \varphi(z;a)$ has no jump on a contour which extends to infinity. Along $\mathcal{B}$ we have $(\pa_x \varphi(z;a))_+ = -(\pa_x\varphi(z;a))_-$. 
 Since $\varphi(z;a)$ vanishes at the branch point $z=\im a$ (and is constant $\pm 2\pi\im$ at $z=-\im a$ on the two sides) we deduce that $\pa_x \varphi(z;a)$ must be zero at $\pm\im a$.  Also (compare Section \ref{bcut} below),
 \be
 \frac{\partial\varphi}{\partial x}(z;a) = \begin{cases}
 -\frac{1}{z}+\mathcal{O}(1),&z\rightarrow 0\\
 \mathcal{O}(1),&z\rightarrow\infty.
 \end{cases}
 \ee
 Thus the ratio of the proposed expression for $\pa_x \varphi(z;a)$ is bounded at $z=\pm\im a$, analytic across the cut and bounded at $z=\infty$ with limit $1$ at $z=0$. The Proposition now follows from Liouville's theorem.
\end{proof}
The potential \eqref{ytophi} is related to the $\gg$-function \eqref{ansatz} by 
\be
\label{phitog}
 	\gg(z)= \frac 1 2 \big(\vartheta(z;x,N)-\varphi(z;a)+\ell\big),\ \ z\in\mathbb{C}\backslash\mathcal{B}
\ee
where the constant $\ell$ ({\em modified Robin constant}) is defined by the requirement that $\gg(z) = \ln (z) + \mathcal O(z^{-1})$ as $|z|\to \infty$. The relevant inequalities for $\gg(z)$ are more conveniently expressed directly as inequalities for the effective potential. 
In terms of the latter, 
 the following properties of the effective potential are equivalent to the existence of the $\gg$-function and characterize  $\varphi(z;a)$ (the proof of these statements is simple if not already obvious)
 \begin{enumerate}
 \item Near $z=0$ the effective potential has the behavior 
 \begin{equation}\label{phiatzero}
 \varphi(z;a)= -\vartheta(z;x) + \mathcal O(1) \ \ \Rightarrow \ \ 
 y(z) = \frac 1 2 \vartheta_z(z;x) + \mathcal O(1)\ ,
 \end{equation}
 while near $z=\infty$ it behaves as 
 \be
 \varphi (z) = -2\ln z + \mathcal O(1).\label{phiatinfty}
 \ee
 \item  Analytic continuation of $\varphi(z;a)$ in the domain $z\in\C\backslash\mathcal B$ yields the same function up to addition of  {\em imaginary} constants; in particular, the analytic continuation of $\varphi(z;a)$ around a large circle yields $\varphi(z;a) + 2\pi\im$;
 \item For each component $\mathcal B_j$ of $\mathcal B$  we have that,
 \begin{equation*}
 \varphi_+ (z;a) + \varphi_-(z;a) = -2\im\alpha_j ,\ \ \  \ z\in \mathcal B_j,\ \ \alpha_j\in\mathbb{R}
 \end{equation*}
\item The effective potential 
\begin{equation}\label{effective}
	\Phi(z;x)\equiv \Re\big(\varphi (z;a)\big),\ \ \ z\in\mathbb{C}\backslash\mathcal{B}
\end{equation}
with $a=a(x)$ as in \eqref{i:8} and \eqref{pick} is a harmonic function in $z\in\C\backslash\mathcal B$. Moreover $\Phi(z;\cdot) \big|_{\mathcal B} \equiv 0$. 
\item {\bf Inequality 1.} The sign of $\Phi(z)$ on the left and right of $\mathcal B$ is {\em negative}.
\item {\bf Inequality 2.} We can continuously deform the contour of integration $\gamma$ to a simple Jordan curve (still denoted by $\gamma$)  such that $\mathcal B \subset \gamma$ and such that $\Phi(z)\big|_{\gamma \setminus \mathcal B} >0$.
 \end{enumerate}
 Note that $\varphi(z;a)$ and $\gg(z)$ are both related to the  antiderivative of the differential 
 \begin{equation*}
 	y(z)\d z=\big(P_{4N+2}(z)\big)^{\frac{1}{2}}\frac{\d z}{z^{2N+2}}
\end{equation*}
which is defined on a Riemann surface $X=\{(w,z):\,w^2=P_{4N+2}(z)\}$ of genus between $0$ and $2N$. Since $\Phi(z) = \Re(\varphi(z))$ vanishes along $\mathcal B$, it also follows that $\mathcal B$ is a subset of its zero level set; therefore, $\mathcal B$ consists of an union of arcs defined locally by the differential equation $\Re(y(z) \d z) =0$.

\subsection{Location of branch cut}\label{bcut}
The following Proposition appeared in \cite{BB} but applies also to the present situation \eqref{g:4}.
\bp 
\label{cutBg0}
The effective potential $\Phi(z;a)=\Re\big(\varphi(z;a)\big)$ has the following properties:
\begin{enumerate}
	\item The function $\Phi(z;a)$ is defined modulo a sign depending on the determination of $(z^2+a^2)^{\frac{1}{2}}$.
	\item The zero-level set $\mathcal{Z}=\{z\in\mathbb{C}: \Phi(z;a)=0\}$ is well defined independent of the determination of the square root in (1) and invariant under the reflection $z\mapsto -z$.
	\item For $|a|$ sufficiently large there are two smooth branches of the zero-level set $\mathcal{Z}$ which connect $z=\pm \im  a$ and which are symmetric under $z\mapsto -z$.
\end{enumerate}
\ep
\begin{proof} Statements $(1)$ and $(2)$ follow just as in (\cite{BB} Proposition 3.5), for $(3)$ we note that, as $a\rightarrow\infty$
\begin{equation}
	\varphi(za;a)\rightarrow -2\ln\left(\frac{z+(z^2+1)^{\frac{1}{2}}}{\im}\right)+\frac{2}{z}(z^2+1)^{\frac{1}{2}}=-2\int_{\im}^z\big(1+w^2\big)^{\frac{1}{2}}\,\frac{\d w}{w^2}\equiv Q(z)
\end{equation}
and the limit is uniform on compact subsets of the Riemann sphere not containing $z=0$. The remaining logic is now as in \cite{BB}.
\end{proof}
Suppose that $a>0$ is sufficiently large and thus Proposition \ref{cutBg0} applies. We claim that $\mathcal{B}=\mathcal{B}(x,N)$ is the branch in point $(3)$ above which intersects the positive half ray $\mathbb{R}_+$ (by deformation this fixes the branch cut for all $x\notin P_N$). In order to see this, recall from \eqref{phiatzero}, as $z\rightarrow 0$,
\begin{equation*}
	y(z)\sim-\frac{1}{2}\vartheta_z(z;x,N)\sim\frac{1}{2z^{2N+2}}\ \ \ \ \Rightarrow\ \ \ y(z)\rightarrow+\infty,\ \ z\downarrow 0.
\end{equation*}
But this requires in \eqref{y1} that $(z^2+a^2)^{\frac{1}{2}}\sim -a$ as $z\rightarrow0$. Simultaneously \eqref{phiatinfty} requires $y(z)\sim\frac{1}{z}$ near $z=\infty$ and hence $(z^2+a^2)^{\frac{1}{2}}\sim z$ as $z\rightarrow+\infty$. Hence, the determination of the square root in \eqref{y1} has to change on the positive half ray, i.e. $\mathcal{B}=\mathcal{B}(x,N)$ is as claimed.

\subsection{The inequalities of the $\gg$-function and the region $P_N$}\label{gfunctdelta} Since the quadratic differentials
\begin{equation}
\label{etag}
	\eta=(\d\varphi)^2=4(z^2+a^2)P^2(z;a)\frac{\d z^2}{z^{4N+4}},\ \ \ \ \textnormal{and in general}\ \ \ \ \ \eta=(\d\varphi)^2=4P_{4N+2}(z)\frac{\d z^2}{z^{4N+4}},
\end{equation}
are of the type studied by Jenkins and Spencer \cite{JS}, that is, of the form $R(z) \d z^2$ with $R(z)$ a rational function, we can follow some of the reasoning which was  already explained in \cite{BB}.\smallskip

{\bf Preliminaries.}  Define the set  $\mathfrak H_x$ to consist of the union of the second order poles and all  {\it critical  trajectories}, i.e.,   all solutions of $\Re(\d \varphi(z;a))= \Re(2y(z) \d z)=0$ that issue from each of the zeros and simple poles of $R(z)$; the latter are absent in our case. The zeros are at $\pm\im a$ and at the $N$ pairs $\pm z_{j}$ which are the roots of the even polynomial  $P(z;a)$. Also \cite{S}, there are 
 {\em $2k+1$} branches of $\mathfrak H_x$  issuing from each of the points of order $k$ of $R(z)$, $k=-1,0,1,...$ (the case $k=-1$ corresponds to simple poles, and all others to zeroes). We are interested in the connected components of
 \begin{equation*}
 	\C\backslash\mathfrak H_x=\bigsqcup_jK_j
\end{equation*}
and a simple argument in analytic function theory (see  \cite{JS}) shows that each {\em simply} connected component $K_j$ is conformally mapped by $\varphi(z;a)$ into a half-plane or a vertical strip $\alpha<\Phi(z;x)<\beta$; each {\em doubly} connected component $K$ is mapped to an annulus (or a punctured disk) $\{
r_-<|w|<r_+\}$ by $w = {\rm e}^{\frac{2\pi\im} {p} \varphi(z)}$ where $p= \oint_\gamma \d \varphi$  and $\gamma$ is a closed simple contour separating the two boundary components of $K$. It is also shown in \cite{JS} that there are no other possibilities for the topology of the connected components $K_j$. Moreover, there is a one-to-one correspondence between annular domains (including the degenerate case of a punctured disk) and  free homotopy classes of simple closed contours $\gamma$ for which $\oint_\gamma \d \varphi \neq 0$. In our case there is only one such class corresponding to a loop encircling the origin, and hence only one annular domain which we denote by $K_\infty$ (which is actually a punctured disk).\smallskip

By construction, $\varphi(z;a)\sim -2\ln z, z\rightarrow\infty$ which shows that $z=\infty$ is at the center of a conformal punctured disk via the conformal map $w = {\rm e}^{\frac{1}{2}\varphi(z)}$. Moreover the level sets $C_r= \{z:\,\Phi(z;x) = -2\ln r\}$ are foliating a region around $z=\infty$ in topological circles if $r$ is sufficiently large. Thus none of the hyperelliptic trajectories issuing from $\pm\im a, \{\pm z_j\}$ can ``escape'' to infinity; they either connect to $z=0$ or amongst each other.
Suppose $r_0$ is the infimum of the $r>0$ for which $C_r$ is smooth; this means that $C_{r_0}$ contains at least one zero of $\d_z\varphi$ (by symmetry, it contains then two zeros in our situation). The annular (punctured disk) domain $K_\infty$ is then (see Figure \ref{Kinfty})
\be
K_\infty= \ov {\bigcup_{r>r_0} C_{r}}.
\ee
We denote also $D_0= \ov {\C \setminus K_\infty}$, which is a simply connected, symmetric region containing the origin.\smallskip

\noindent\begin{minipage}{0.59\textwidth}
{\bf Necessary and sufficient condition for the correct inequalities in genus zero.} 
We argue that we need to have $r_0=0$. To put it differently, the ``first encounter'' of the level sets $C_r$ as $r $ decreases must be  with the two branch points $\pm\im a$ rather than any of the zeros $\{z_j\}$. 
We shall then verify that this occurs for $x>0$ large enough.\\

 {\bf Sufficiency}. Suppose now that $r_0=0$ and thus $\pm\im a \in K_\infty$ and $\pm z_j \in\textnormal{Int}(D_0)$.
 Then the simple, closed loop $\pa K_\infty$ is separated into two components by $\pm\im a$ and each of them is an hyperelliptic trajectory. 
We know that there must be three trajectories from each $\pm\im a$ and two of them are already accounted for and  form the boundary of $D_0$ (see Fig. \ref{Kinfty});
 thus the third trajectory is entirely contained in $D_0$, which is   compact. 
 
 Now let us turn our attention to $D_0$; the points $\pm z_j \in D_0$ for $j=1,\dots, N$. In $D_0$ each branch of $y(z)$ \eqref{y1} is single valued (the branch points are on the boundary of $D_0$). 
 Only one of the two branches of $y(z)$ has the behavior $\frac 1 2 \vartheta_z(z;x)$; integrating this branch from $\im a$ coincides with $\varphi(z;x)$ in $D_0$. The value of the sign of $\Phi$ in the interior of $D_0$ close to the boundary $D_0$ determines which of the two parts of $\pa D_0\backslash\{\pm\im a\}$ is the branch cut $\mathcal B$: this is the part which has $\Phi>0$ on {\em both} sides (ie. in $D_0$ and $K_\infty$). Thus $\Phi$ is continuous but not harmonic on $\mathcal B$, while on $\pa D_0 \backslash\mathcal B$ it is continuous and harmonic.
We still need to show that there is a path connecting $\pm\im a$ and which lies within the region $\Phi <0$. \end{minipage}\hspace{0.02\textwidth}\begin{minipage}[h]{0.39\textwidth}
\includegraphics[width=1\textwidth]{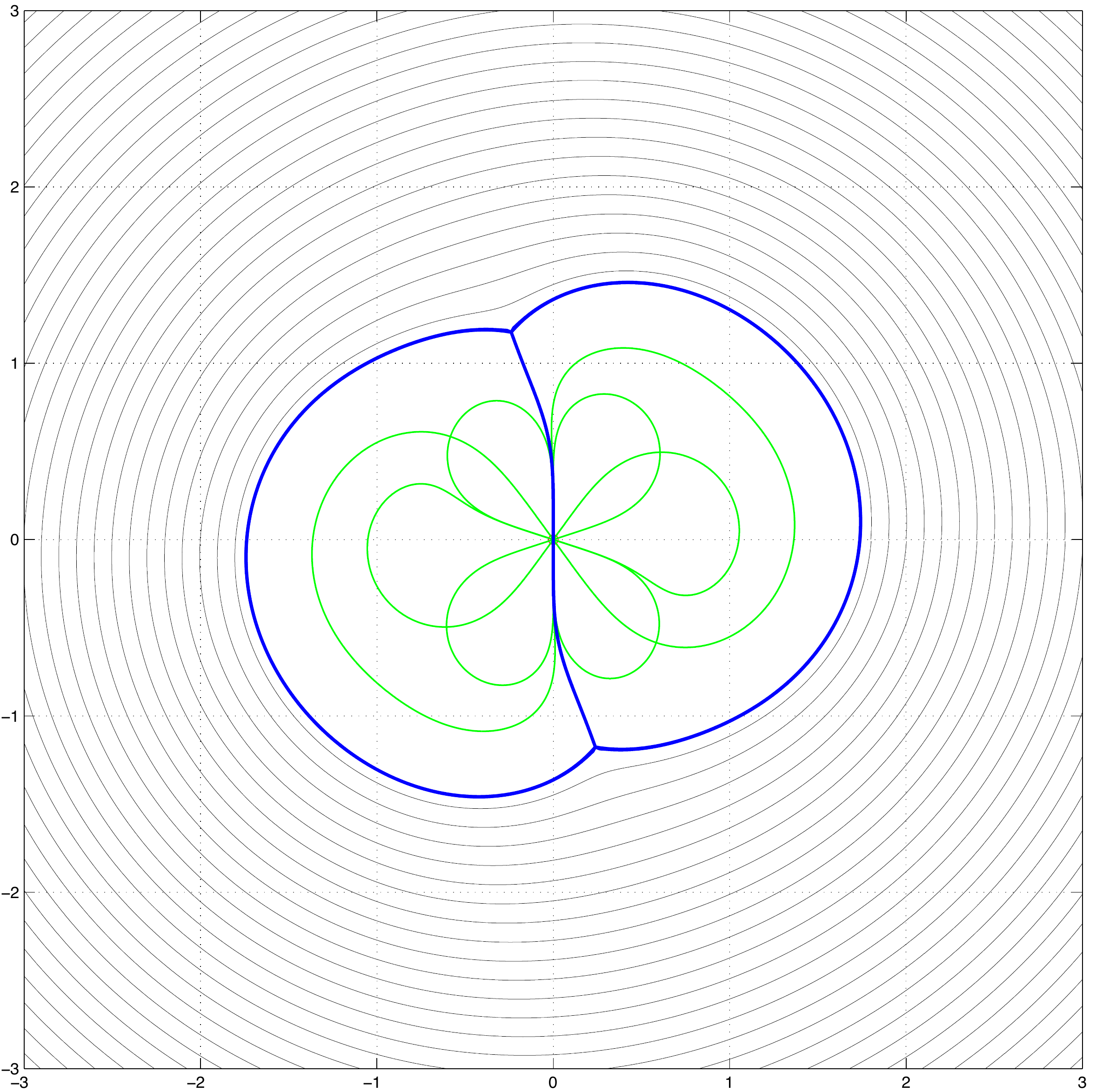}
\captionof{figure}{}{
 Illustration of the conformal punctured disk $K_\infty$, foliated by the trajectories $C_r$. The complement, $D_0$, contains the other critical trajectories. In this example $N=2$ so there are $4$ saddle points inside $D_0$, visible here where the critical trajectories intersect at right angles.}
  \label{Kinfty}
\end{minipage}
\smallskip

This follows from the topological description of the possible regions $K_j$ discussed in the paragraph "Preliminaries". Indeed let $K_1$ be the region containing the arc $\pa D_0\backslash\mathcal B$ where $\varphi(z;a)$ is conformally one-to-one. From the discussion of signs thus far, this is either a half-plane $w = \Phi  <0$ or a strip $-\epsilon< \Phi  <0$ (the only annular domain is $K_\infty$). The two points $\pm\im a$ are mapped on the imaginary axis $\Re w = \Phi =0$; thus there is a path connecting $\varphi(\im a)$ to $\varphi(-\im a)$ in the $w$-plane which lies in the left half plane. The pre-image of this path in the $z$ plane connects thus $\pm\im a$ and $\Phi $ restricted to the interior points of this path is strictly negative.
\smallskip

 {\bf Necessity.} If $r_0>0$ then $\pm\im a\in D_0$. The trajectories issuing from $\pm\im a$ all belong to the zero level set of $\Phi$. None of them can connect to any of the zeros $\{\pm z_j\}$, and thus they either connect to each other or to the origin. 
 Since the sign of $\Phi $ changes $2N+1$ times around $z=0$, they all must go to the origin and thus there is no possibility of deforming the contour of integration so that it contains the branch cut $\mathcal B$ and avoids the origin.\smallskip
 
{\bf Sufficient condition for the correct inequalities in highest genus.}
 We work with the same general setup as in the previous case.
 Now the quadratic differential is of the form on the right in \eqref{etag}. Suppose that $P_{4N+2}(z)$ there has all simple roots $\{a_{k}^\pm\}_{k=1}^{2N+1}$ (the roots come clearly in pairs of opposite signs). We claim that a sufficient condition for the fulfillment of the inequalities  is that {\bf all} branch points $a_k^{\pm}$ lie on $\pa D_0 = \pa K_\infty$.
In this case $\pa K_\infty$ is broken into $4N+2$ arcs (see for example Figure \ref{maxgenus}). There is only one branch of $y(z)$ that behaves as $y(z)\sim\frac 12 \vartheta_z(z;x)$ near $z=0$; the integral of this branch with base point $a_1^+$ is single--valued in $D_0 = \C \backslash K_\infty$ because the region contains no branch points and the residue of $y(z)$ at $z=0$ vanishes; this integral then defines $\varphi(z;a)$ (and $\Phi$) within $D_0$. The level curves of $\Phi(z;x)$ that issue from $a_j$ and do not connect to other branch points must connect to the origin because $\Phi(z;x)$ changes sign exactly $4N+2$ times when going around the origin. The regions where $\varphi(z;a)$ is now one-to-one within $D_0$ are $4N+2$ half-planes because their boundary has only one connected component. Necessarily in $2N+1$ of them $\Phi(z;x)<0$ and $2N+1$ of them $\Phi(z;x) >0$. The arcs of $\pa K_{\infty} \backslash \{a_k^{\pm}\}_{k=1}^{2N+1}$ bounding the three regions where $\Phi(z;x)<0$ are the cuts and the other are simply zero level sets separating regions where $\Phi(z;x)$ has opposite signs. The possibility of connecting two branch points that are connected by an arc of these level sets follows exactly by the same argument used in the previous paragraph.\smallskip

{\bf Occurrence of the necessary/sufficient conditions.} 
By point {\bf (3)} of Proposition  \ref{cutBg0}, for $a$ (hence, $x$) large enough there is a smooth branch of the zero level set of $\Phi(z;x)$ that connects $\pm\im a$; by symmetry, there is another one and thus the third branch of the level set that issue from $\pm\im a$ must go to zero (we have seen that there is no branch that extends to infinity). The remaining roots of $P(z;a)$ all tend to zero as $a\to \infty$ (which is easily seen from the explicit expression \eqref{g:3}). Thus they must fall within the region $D_0$. Then the necessary condition in genus $0$  is fulfilled. For the case of maximal genus, the occurrence of the sufficient condition is contained in Proposition \ref{propgmax}.\\

{\bf The discriminant locus}.
By discriminant locus we refer to the boundary of the locus in the $x$--plane where the inequalities for the genus zero Ansatz fail. This is the boundary of a region $P_N$; from the discussion above it follows 
 that the inequalities are preserved under a deformation in $x$, until failure occurs exactly when  one of the zeros of $P(z;a)$ falls on the branch cut $\mathcal{B}$ connecting $\pm\im a$, and hence, by symmetry, one also intersecting the opposite arc. We know that this does not happen for $|x|$ sufficiently large and hence the discriminant must be a bounded set.\smallskip
 
 In order to detect the occurrence of the situation above it is {\em necessary} (but not sufficient) that 
 $\Phi(z_j;x) =0$ for some $j$, i.e., one of the  saddle points of $\Phi$ lies on the zero-level set;
 \be
 \pa P_N \subset  
 \mathfrak Z_N=
  \le\{ 
 x\in \C:\ \exists\,z\in \C: 
\Re\big(\varphi(z;a(x))\big) = 0,\ 
\varphi_{zz}(z;a(x)) = 0
 \ri\}
 \label{condzero}.
 \ee 
 The set $\mathfrak Z_N$ is clearly closed, and thus $\pa P_N$ must be compact (since we know already it is bounded). However, the set $\mathfrak Z_N$ is strictly larger than $\pa P_N$; indeed it describes the situation where any of the saddle points of $\Phi(z;x)$ intersect any branch of the zero level set; The zero level set contains several branches besides the branch cut and hence the set $\mathfrak Z_N$ in \eqref{condzero} describes also all these ``fake'' situations.\smallskip
 
A detailed analysis for arbitrary $N$ seems unwieldy. We shall attempt below only a partial study of the case $N=2,3$ in Appendix \ref{outer23}, where we show that the points \eqref{starcorners} do indeed belong to the boundary of the regions $P_N$ for $N=2,3$. However the set $\mathfrak Z_N$ is easily drawn and the results are displayed in Figure \ref{Stars} together with the roots of some higher polynomials. The result of this discussion is the following theorem;
 \begin{theo}
 The roots $\mathfrak R_n^{[N]}$  of the polynomials $Q_n^{[N]} \le( n^{\frac {2N}{2N+1}} x\ri)$ lie all within an arbitrarily small neighborhood of a compact region $P_N$ as $n\to \infty$;  the boundary of this region consists  entirely  of  a finite union of real-analytic arcs in the $x$ plane satisfying the condition \eqref{condzero}.
 \end{theo}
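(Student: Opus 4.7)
The strategy is the one already set up in Section \ref{character}: translate the zeros of $\mathcal{Q}_n^{[N]}(n^{\frac{2N}{2N+1}}x)$ into non-solvability of Problem \ref{master}, and then use the Deift-Zhou nonlinear steepest descent to show the RHP \emph{is} solvable for all $x$ away from a compact set, with $n$ sufficiently large. The first step is to \emph{define} the region $P_N$ as the closure of the set of $x\in\mathbb{C}$ for which the genus-zero ansatz \eqref{y1}--\eqref{g:4} fails to deliver a $\mathfrak{g}$-function satisfying the inequalities spelled out in Section \ref{gfunctdelta}. That $P_N$ is bounded (hence compact) has already been shown: for $|x|$ large, point (3) of Proposition \ref{cutBg0} provides the branch cut $\mathcal{B}$ joining $\pm \im a$, all the other zeros of $P(z;a)$ collapse toward $z=0$ and hence lie inside $D_0$, and the sufficient condition for the genus-zero inequalities is met.

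For any $x\in\mathbb{C}\setminus P_N$, the effective potential $\Phi(z;x)=\Re\big(\varphi(z;a(x))\big)$ satisfies Inequalities 1 and 2, and the contour $\gamma$ of Problem \ref{master} can be continuously deformed to enclose $\mathcal{B}$ while keeping $\Phi >0$ off $\mathcal{B}$. Composing the standard sequence of Deift-Zhou transformations $\Gamma \mapsto Y \mapsto T \mapsto S \mapsto R$, with opening of lenses along $\mathcal{B}$ using the $\mathfrak{g}$-function of \eqref{phitog}, reduces the problem to a small-norm RHP for $R(z)$ with global parametrix built from the genus-zero ``outer'' model and local Airy parametrices at $\pm \im a$. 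Since all jumps of the small-norm problem are $I+\mathcal{O}(n^{-1})$ uniformly on compact subsets of $\mathbb{C}\setminus P_N$, Problem \ref{master} is solvable for all sufficiently large $n$ in that region. Consequently $\mathfrak{R}_n^{[N]}\cap\bigl(\mathbb{C}\setminus \mathcal{N}_\delta(P_N)\bigr)=\varnothing$ for $n\geq n_0(\delta)$, which gives the first assertion. (The details of this standard machinery are, as the authors note, essentially identical to those in \cite{BB} and will not be repeated.)

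For the boundary characterization, one argues by the deformation already carried out in Section \ref{gfunctdelta}. Starting at some $x_0\in \mathbb{C}\setminus P_N$ and moving $x$ along a continuous path, Lemma \ref{Lemmader} gives a smooth dependence of $y(z;x)$ and hence of the critical graph of $\eta=(d\varphi)^2$ on $x$ away from the branch points of the map $a(x)$ and the discriminant of $P(z;a)$. The inequalities for $\Phi$ are preserved along the path until some zero $z_j$ of $P(z;a)$ first meets a branch of the zero level set of $\Phi$; at such a critical $x$, simultaneously $\varphi_{zz}(z_j;a(x))=0$ (since $z_j$ is a saddle of $\Phi$ by \eqref{y1}) and $\Re \varphi(z_j;a(x))=0$. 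Thus $\partial P_N\subseteq \mathfrak{Z}_N$ as in \eqref{condzero}. Real-analyticity of $\mathfrak{Z}_N$ away from its singular points follows from the real-implicit-function theorem applied to the system $\{\Re\varphi=0,\,\varphi_{zz}=0\}$, using that the Jacobian is generically non-degenerate (failure occurs only where two saddles coalesce or where $a'(x)=\infty$, i.e.\ at the finitely many branch points of \eqref{i:8}, cf.\ Proposition \ref{Propstarcorners}). Finiteness of the arc decomposition of $\mathfrak{Z}_N$ follows because $\mathfrak{Z}_N$ is the projection to the $x$-plane of a real-algebraic subvariety of the algebraic surface $\{(x,z):\,(2a(x))^{2N+1}-x(2a(x))^{2N}+(-1)^N\binom{2N}{N}=0,\,P(z;a(x))=0\}$ cut by the real-analytic condition $\Re\varphi=0$; any such set has finitely many connected components by standard tameness (\emph{o}-minimality of real-analytic/algebraic sets).

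\textbf{Main obstacle.} The most delicate point is not the abstract Deift-Zhou step but verifying uniformity of the inequalities as $x$ ranges over compact subsets of $\mathbb{C}\setminus P_N$ that approach $\partial P_N$, so that the constant $n_0(\delta)$ in the first statement can indeed be chosen independent of $x$ in $\mathcal{N}_{2\delta}(P_N)^c$. This requires a genuinely uniform bound on the distance from the nearest $z_j$ to $\mathcal{B}$ in terms of $\mathrm{dist}(x,P_N)$, and is the place where continuity of the Jenkins-Spencer critical graph in $x$ must be invoked carefully. A secondary, more qualitative issue (noted by the authors themselves) is that $\mathfrak{Z}_N$ is strictly larger than $\partial P_N$: extracting the genuine boundary from $\mathfrak{Z}_N$, and in particular showing that the outer vertices \eqref{starcorners} belong to $\partial P_N$, requires further work of the kind deferred to Appendix \ref{outer23}.
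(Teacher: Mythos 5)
Your proposal follows essentially the same route as the paper's own argument: define $P_N$ as the failure locus of the genus-zero $\gg$-function inequalities, establish boundedness from the large-$|x|$ analysis of Proposition \ref{cutBg0} and the collapse of the remaining zeros of $P(z;a)$ into $D_0$, invoke the Deift--Zhou analysis (deferred to \cite{BB}) to get solvability of Problem \ref{master} off $\mathcal N_\delta(P_N)$, and obtain $\pa P_N\subset\mathfrak Z_N$ by the deformation argument that the inequalities can only fail when a saddle of $\Phi$ first meets the zero level set. The extra touches you add (o-minimality for finiteness of the arcs, and the explicit caveats about uniformity near $\pa P_N$ and about $\mathfrak Z_N$ being strictly larger than $\pa P_N$) are consistent with, and in places slightly more careful than, what the paper itself records.
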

 The condition \eqref{condzero} is spelt out in more detail in the statement of Theorem \ref{thm:boundary}, which is henceforth proved as well.
 \br
 A careful consideration should allow also to prove that the region is simply connected. 
 It is also relatively simple to show that $x=0$ belongs to the interior of this region (see Proposition \ref{propgmax} below).
 \er
 
 The set $\mathfrak Z_N$ in \eqref{condzero} contains the points $x$ for which one pair of roots of $P(z;a)$ coincides with the branch points $\pm\im a$; these points are easily computed and are precisely the $2N+1$ points in \eqref{starcorners}. However we cannot positively conclude for general $N$ that they are on the boundary of $P_N$, although this is quite evident from the numerics. Also, the detailed shape of $\pa P_N$, beyond the easily established discrete $\Z_{2N+1}$ symmetry, is hard to describe in more detail; for example it is not obvious how to conclude that it consists of $4N+2$ smooth arcs for $N\in\mathbb{Z}_{\geq 2}$, as the Figure \ref{Stars} clearly shows.
 We find it however already sufficiently interesting that we can narrow down the boundary of $\Delta$ as a subset  of a simple set of equations \eqref{condzero}, although we cannot completely describe it.

\subsection{At the center of $P_N$} In a small vicinity of $x=0$, we have
\begin{equation*}
	P_{4N+2}(z)=\prod_{k=1}^{2N+1}(z-a_k^+)(z-a_k^-)
\end{equation*}
with $a_j^{\pm}\neq a_k^{\pm}$ for $j\neq k$. As in \cite{BB}, the branch points $a_k^+=-a_k^-$ are partially determined through \eqref{phiatzero},\eqref{phiatinfty} and in addition through Boutroux  type conditions
\begin{equation*}
	\Re\left(\oint_{\gamma_j}y(z)\d z\right)=0,\ \ \ \gamma_j\in H_1(X,\mathbb{Z});\ \ y(z)=\frac{1}{z^{2N+2}}\big(P_{4N+2}(z)\big)^{\frac{1}{2}}.
\end{equation*}
The latter are imposed on the hyper elliptic curve $X=\{(w,z):\,w^2=P_{4N+2}(z)\}$ which is obtained by crosswise gluing together two copies of $\mathbb{C}\backslash\mathcal{B}$ with $\mathcal{B}=\bigcup_{k=1}^{N}[a_{2k-1}^+,a_{2k}^+]\cup[a_{2N+1}^+,a_1^-]\cup\bigcup_{k=1}^N[a_{2k}^-,a_{2k+1}^-]$. Solvability of the resulting system for $\{a_k^+\}_{k=1}^{2N+1}$ would now follow as in \cite{BB}, but here we are only interested in the case $x=0$.
\bp[Compare Proposition 3.9 in \cite{BB}]
\label{propgmax}
For $x=0$ the $\gg$--function is obtained from \eqref{phitog}, \eqref{ytophi} using  
\begin{equation*}
y(z) = \frac 1{z^{2N+2}} \big(R(z)\big)^{\frac{1}{2}}\ ,\ \ \qquad R(z) = P_{4N+2}(z)\Big|_{x=0}= z^{4N+2} + \frac 1 4
\end{equation*}
which is defined and analytic off $z\in\mathcal{B}$ with the branch points $a_{k}^+=a_{k,0} = 2^{-\frac 1{2N+1}}{\rm e}^{\frac {\im\pi k}{2N+1}},k=1,\ldots,2N+1$.
\ep
\begin{proof}
{\bf Local behavior.} Near $z=0$ we have 
\be
y(z) = -\frac 1 2\frac {1} {z^{2N+2}}  \le(1 + \mathcal O\left(z^{4N+2}\right)\ri)
\ee
and near infinity clearly $y(z) = \frac 1 z + \mathcal O(z^{-2})$. Note that the determination of the root near $z=0$ is the opposite.\smallskip

{\bf Boutroux condition.} We have 
\be
\int_{a_{j,0}}^{a_{j+1,0}} \big(R(z)\big)^{\frac{1}{2}}\frac{\d z}{z^{2N+2}} = \int_{\omega a_{j,0}}^{\omega a_{j+1,0}} \omega^{-2N-1} \big(R(\omega z)\big)^{\frac{1}{2}}\frac { \d z}{z^{2N+2}} =- \int_{a_{j+1,0}}^{a_{j+2,0}} \big(R(z)\big)^{\frac{1}{2}}\frac { \d z}{z^{2N+2}} ,
\ee
and thus it is sufficient to verify the Boutroux condition 
\begin{equation*}
	\oint_{\gamma} \big(R(z)\big)^{\frac{1}{2}} \frac {\d z}{z^{2N+2}}\in\im\R
\end{equation*}	
for a specific $j\in\{1,\ldots,2N\}$. This condition guarantees that all branch points lie in the zero level set of $\Phi$.  But for $j=1$ it follows immediately that the integral is imaginary using the Schwartz symmetry.\smallskip

{\bf Connectedness of the level curves.}
First of all the set $\Phi(z)=0$ in $\C\backslash\{0\}$ consists of one connected component alone; this is so because there are no saddle points and if there were two or more connected components, there would have to be a saddle point in the region bounded by them.
We shall now verify that the level curves satisfy the necessary and sufficient conditions specified in Section \ref{gfunctdelta}. The critical trajectories  must
\begin{enumerate}
\item connect all $4N+2$ branch points
\item obey the $\Z_{4N+2}$ symmetry because of obvious symmetry.
\end{enumerate}
A simple counting then shows that the only possibility is that exactly one trajectory from each branch point (in fact a straight segment) connects the branch points to $0$ because the sign of $\Phi$ changes $4N+2$  times around a small circle surrounding the origin.  The other two trajectories must then connect the branch points. This is depicted in  Figure \ref{maxgenus}. The discussion on the necessary and sufficient condition for the correct inequalities is now as explained in (\cite{BB}, Section  3.1).
\end{proof}

\begin{figure}
\includegraphics[width=0.3\textwidth]{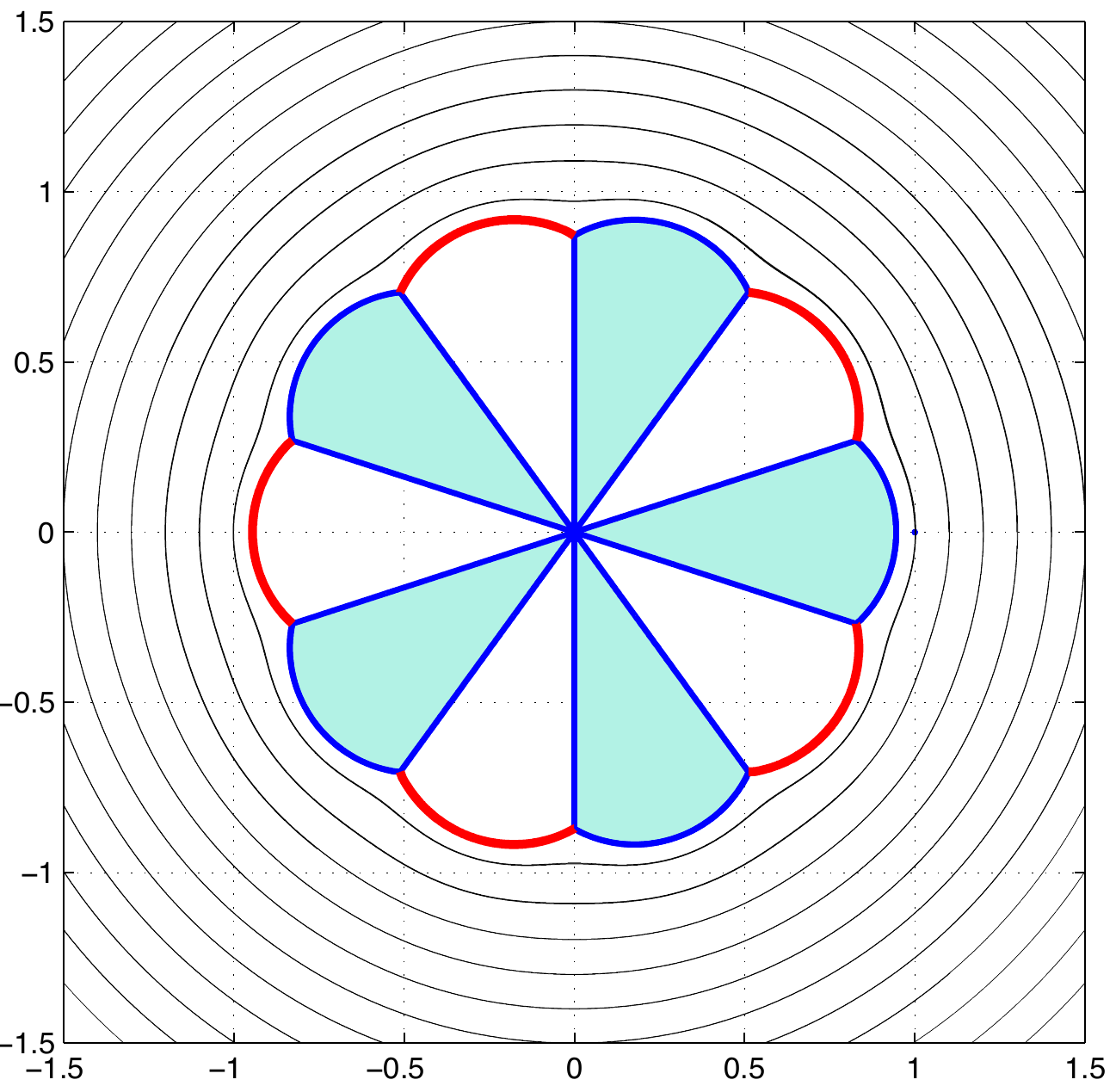}
\hspace{2cm}
\includegraphics[width=0.3\textwidth]{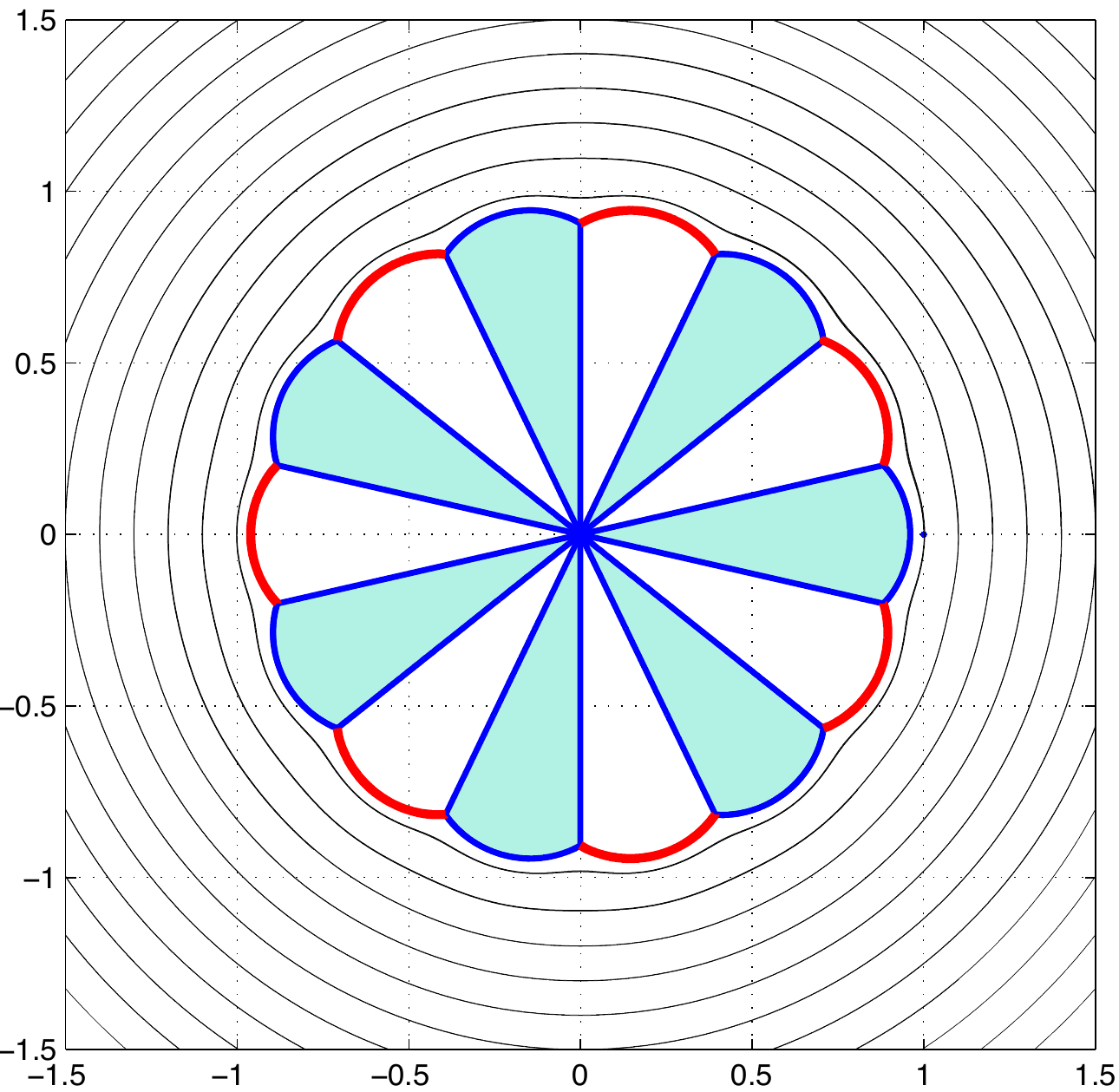}
\caption{The placement of the trajectories of the quadratic differential $\eta$ in the case of $x=0$; the red arcs are the branch cuts of the the $\gg$--function. The cases presented correspond to $N=2,3$ (left/right). Also indicated is the foliation by the trajectories $C_r$ of the region $K_\infty$ (see Section \ref{gfunctdelta}).
The shaded (cyan) regions indicate where $\Phi>0$; all the trajectories that issue from the branch points constitute the set $\Phi=0$.
 }
\label{maxgenus}
\end{figure}

\appendix
\section{Proof of the Miura relation \eqref{Miura2}}
\label{appmiura}
\br
We draw the reader's attention to the various notations used in this section,
\be
\bt = (t_1,t_2,t_3,t_4,\dots)\ ,\ \ \ 
\bt_o= (t_1,0,t_3,0,t_5,\dots)\ ,\ \ \ 
\wt{\bt}_o= (t_1, 0,2^2 t_3, 0, 2^4 t_5,\dots),\ \ \ t_j\in\mathbb{C}.
\ee
\er
Let $k,\ell\in\mathbb{Z}_{\geq 0}$ and introduce
\begin{equation*}
\mu_{k}(\bt) = \frac{1}{2\pi\im}\oint_{S} z^k {\rm e}^{\vartheta(z)}\frac {\d z}z,\ \ 
\vartheta(z) = \vartheta(z;\bt) = \sum_{j\geq 1} \frac {t_j}{z^j},\ \  \ \ \ 
\Delta_{n,\ell}(
\bt) = \det\big[\mu_{\ell + j+k-2}(\bt)\big]_{j,k=1}^n,\ \ \Delta_{0,\ell}(\bt)\equiv 1
\end{equation*}
where $S\subset\mathbb{C}$ denotes the unit circle traversed in counterclockwise orientation. Recalling \eqref{hdef} we see that
\begin{equation*}
	\mu_k(\bt)=\frac{1}{k!}\frac{\d^k}{\d w^k}\exp\Big[\sum_{j\geq 1}t_jw^j\Big]\bigg|_{w=0}=h_k(\bt)
\end{equation*}
and thus with \eqref{JTrudi},
\begin{equation*}
	\Delta_{n,\ell}(\bt)=s_{(\ell+n+1)^n}(\bt).
\end{equation*}
In particular, by Lemma \ref{lemmadouble}, we know that for the special value $\ell=0$ we have the identity 
\begin{equation*}
	\Delta_{n,0} (\bt_o) =s_{(n+1)^n}(\bt_o)=2^{-n^2}s_{\delta_n}^2\big(\,\wt{\bt}_o\big).
\end{equation*} 
Next, let $\{p_{n,\ell}(z)\}_{n\geq 0}$ be the monic orthogonal polynomials associated with the measure 
\begin{equation*}
	\d\nu_{\ell}(z)=\frac{1}{2\pi\im}z^\ell  {\rm e}^{\vartheta(z)}\frac {\d z}z,\ \ z\in S,\ \ell\in\mathbb{Z}_{\geq 0};\ \ \ \ \ \oint_Sp_{n,\ell}(z)p_{m,\ell}(z)\d\nu_{\ell}(z)=\widehat{h}_n\delta_{nm}.
\end{equation*} 
It is well known \cite{D} that the matrix 
\begin{equation}\label{GAMMA}
\Gamma_{n,\ell}(z) = \begin{bmatrix}
p_{n,\ell}(z) & \frac{1}{2\pi\im}\oint_Sp_{n,\ell}(w)\frac{\d\nu_{\ell}(w)}{w-z}\\
 \gamma_{n-1,\ell}\,p_{n-1,\ell}(z) & \frac{\gamma_{n-1,\ell}}{2\pi\im}\oint_Sp_{n-1,\ell}(w)\frac{\d\nu_{\ell}(w)}{w-z}\\
\end{bmatrix},\ \ z\in\mathbb{C}\backslash S;\ \ \ \ \gamma_{n,\ell}=-2\pi\im\frac{\Delta_{n,\ell}(\bt)}{\Delta_{n+1,\ell}(\bt)}
\end{equation}

satisfies a Riemann--Hilbert problem, i.e. $\Gamma_{n,\ell}(z)$ is analytic for $z\in\mathbb{C}\backslash S$ and we have the conditions
\begin{equation*}
\big(\Gamma_{n,\ell}(z)\big)_+ = \big(\Gamma_{n,\ell}(z)\big)_- \begin{bmatrix}
1 & z^{\ell-1}{\rm e}^{\vartheta(z)}\\
0 & 1
\end{bmatrix},\ \ z\in S;\ \ \ \ \Gamma_{n,\ell}(z)=\left(I+\Gamma'_{n,\ell}(\infty)\frac{1}{z}+\mathcal{O}\left(z^{-2}\right)\right)z^{n\sigma_3},\ z\rightarrow\infty.
\end{equation*}

\bp
\label{Satoid}
The following identities hold for the Hankel determinants $\Delta_{n,\ell}(\bt)$.
\begin{equation}
\frac{\Delta_{n,\ell+1}( \bt-[z])}{\Delta_{n, \ell}(\bt)} = (-1)^n\big(\Gamma_{n,\ell}(z)\big)_{11} \ \qquad \ \ 
\frac{\Delta_{n,\ell-1}( \bt+[z])}{\Delta_{n, \ell}(\bt)} = (-1)^n\big(\Gamma_{n,\ell}(z)\big) _{22}
\label{satodelta}
\end{equation}
and
\begin{equation}
\frac{\Delta_{n+1,\ell}(\bt)}{\Delta_{n,\ell}(\bt)}  = -2\pi\im\,\big(\Gamma_{n,\ell}'(\infty)\big)_{12},\ \qquad \ \
\frac{\Delta_{n-1,\ell}(\bt)}{\Delta_{n,\ell}(\bt)}  = \frac{\im}{2\pi}\,\big(\Gamma_{n,\ell}'(\infty)\big)_{21}
\label{standard}
\end{equation}
where $[z]$ denotes the infinite vector of components $(z, \frac {z^2}2, \frac {z^3}3, \frac {z^4}4,\ldots)$, i.e. 
\begin{equation*}
	\bt\mp[z]=\left(t_1\mp z,t_2\mp\frac{z^2}{2},t_3\mp\frac{z^3}{3},\ldots\right).
\end{equation*}
\ep
\begin{proof}
The two identities in \eqref{standard} follow simply by inspection of the expression \eqref{GAMMA}. 
As for the identities \eqref{satodelta}, 
the  proof follows from Heine's formula for the orthogonal polynomials and the observation that 
\begin{equation*}
	w^{\ell} \exp\big[\vartheta(w; \bt - [z])\big] =  w^{\ell-1} (w-z)  \exp\big[\vartheta(w; \bt)\big],\ \ \ \ 
	w^{\ell} \exp\big[\vartheta(w; \bt + [z])\big] = \frac{w^{\ell+1}}{w-z}  \exp\big[\vartheta(w; \bt)\big].
\end{equation*}
Indeed, we have
\begin{align*}
	\Delta_{n,\ell+1}(\bt-[z])=\det\big[\mu_{\ell+j+k-1}(\bt-[z])\big] _{j,k=1}^n= \frac 1{n!} \oint_{S^n} \prod_{ j<k} (w_j - w_k)^2 \prod_{j=1}^n w_j^{\ell} \exp\big[
	\vartheta(w_j; \bt -[z])\big]\frac{\d w_j}{2\pi\im}\\
	=\frac 1{n!}  \oint_{S^n} \prod_{j<k} (w_j - w_k)^2 \prod_{j=1}^n (w_j-z) \d\nu_{\ell}(w_j)= (-1)^n\det\begin{bmatrix}
	\mu_{\ell} &\cdots & \mu_{\ell+n}\\
	\vdots &  & \vdots\\
	\mu_{\ell+n-1} &  & \mu_{\ell+2n-1}\\
	1 &  \cdots & z^n
	\end{bmatrix}= (-1)^np_{n,\ell}(z) \Delta_{n, \ell}(\bt) 
\end{align*}
where we used the well-known representation of orthogonal polynomials in terms of moment determinants (see, e.g. Proposition 3.8 in \cite{D}).
The second identity can be found in \cite{BDS}, but we can give here a direct derivation using Andreief's identity \cite{A}. Recall that $\d\nu_{\ell}(w)=w^{\ell-1}{\rm e}^{\vartheta(w)}\d w$. Then 
\begin{align}
	\oint_{S^n} \prod_{j<k}(w_k-w_j)^2 \prod_{j=1}^{n} \frac {\d\nu_{\ell}(w_j)}{w_j-z} = \oint_{S^n} \det \Big[w_j^{k-1}\Big]_{j,k=1}^n 
	\det\Big[w_j^{k-1}\Big]_{j,k=1}^n \prod_{j=1}^{n} \frac {\d\nu_{\ell}(w_j)}{w_j-z}\nonumber\\
	=\oint_{S^n} \det \Big[w_j^{k-1} \Big]_{j,k=1}^n\det \bigg[\frac{w_j^{k-1}}{w_j-z}\bigg]_{j,k=1}^n\prod_{j=1}^{n} {\d\nu_{\ell}(w_j)} \label{132}
\end{align}
Multi-linearity allows us to replace the monic powers in the first determinant by the monic orthogonal polynomials $p_{j,\ell}(w)$, so that we obtain
\begin{equation}\label{1322}
	\eqref{132}=\oint_{S^n} \det\Big[p_{k-1,\ell}(w_j)\Big]_{j,k=1}^n\det \bigg[\frac{w_j^{k-1}}{w_j-z} \bigg]_{j,k=1}^n\prod_{j=1}^{n} {\d\nu_{\ell}(w_j)}.
\end{equation}
Now, in the second determinant we can subtract to the columns $2\leq k\leq n$ the multiple $z^{k-1}/(w_k-z)$ of the first column, thus obtaining 
\begin{equation}\label{13222}
\eqref{1322} =  
\oint_{S^n} \det\Big[p_{j-1,\ell}(w_k)\Big]_{j,k=1}^n 
 \det\begin{bmatrix}
 \frac{1}{w_1-z} & \frac{w_1-z}{w_1-z} & \cdots & \frac{w_1^{n-1}-z^{n-1}}{w_1-z}\\
 \vdots & \vdots& &\vdots\\
 \frac{1}{w_n-z} & \frac{w_n-z}{w_n-z} & \cdots & \frac{w_n^{n-1}-z^{n-1}}{w_n-z}
 \end{bmatrix} \prod_{j=1}^n\d\nu_{\ell}(w_j)
\end{equation}
Using now Andreief's identity we obtain  
\begin{equation*}
\eqref{13222} = n! \det \begin{bmatrix}
\oint_Sp_{0,\ell}(w)\frac{\d\nu_{\ell}(w)}{w-z} & \oint_Sp_{0,\ell}(w)\frac{w-z}{w-z}\d\nu_{\ell}(w) & \cdots & \oint_Sp_{0,\ell}(w)\frac{w^{n-1}-z^{n-1}}{w-z}\d\nu_{\ell}(w)\\
\oint_Sp_{1,\ell}(w)\frac{\d\nu_{\ell}(w)}{w-z} & \oint_Sp_{1,\ell}(w)\frac{w-z}{w-z}\d\nu_{\ell}(w) & & \oint_Sp_{1,\ell}(w)\frac{w^{n-1}-z^{n-1}}{w-z}\d\nu_{\ell}(w)\\
\vdots & \vdots& & \vdots\\
\oint_Sp_{n-1,\ell}(w)\frac{\d\nu_{\ell}(w)}{w-z} &\oint_Sp_{n-1,\ell}(w)\frac{w-z}{w-z}\d\nu_{\ell}(w) & \cdots& \oint_Sp_{n-1,\ell}(w)\frac{w^{n-1}-z^{n-1}}{w-z}\d\nu_{\ell}(w)
\end{bmatrix},
\end{equation*}
but due to orthogonality the matrix above has the following structure
\begin{equation*}
\eqref{13222}= n! \det \begin{bmatrix}
\oint_Sp_{0,\ell}(w)\frac{\d\nu_{\ell}(w)}{w-z} & \widehat{h}_0 & \star & \cdots & \star\\
\oint_Sp_{1,\ell}(w)\frac{\d\nu_{\ell}(w)}{w-z} & 0 & \widehat{h}_1 & \star & \star\\
\vdots & \vdots & 0& \ddots&\star\\
\vdots & \vdots &\vdots & \ddots&\widehat{h}_{n-2}\\
\oint_Sp_{n-1,\ell}(w)\frac{\d\nu_{\ell}(w)}{w-z} & 0 & 0 & \cdots & 0
\end{bmatrix}=
(-1)^{n+1}n!  \oint_S p_{n-1,\ell}(w)\frac{\d\nu_{\ell}(w) }{w-z}\prod_{j=0}^{n-2}\widehat{h}_j.
\end{equation*}
However
\begin{equation*}
	\widehat{h}_j = \oint_{S} p_{j,\ell}^2(w) \d \nu_{\ell}(w) = \frac {\Delta_{j+1,\ell}(\bt)}{\Delta_{j, \ell}(\bt)}
\end{equation*}
and therefore
\begin{equation*}
	\Delta_{n,\ell-2}(\bt +[z]) =(-1)^{n+1} \frac{\Delta_{n-1,\ell}(\bt)}{\Delta_{0,\ell}(\bt)}\oint_S p_{n-1,\ell}(w)\frac{\d\nu_{\ell}(w) }{w-z}=(-1)^n\Delta_{n,\ell}(\bt)\big(\Gamma_{n,\ell}(z)\big)_{22}.
\end{equation*}

\end{proof}
\subsection{Dodgson-Hirota bilinear identity}
Consider the following matrix--valued function
\begin{equation*}
H_{n,\ell}(z;\bt, {\bf s}) =  \Gamma_{n,\ell-1}(z;\bt)\begin{bmatrix}
{\rm e}^{\vartheta(z;\bt)-\vartheta(z;{\bf s})} & 0\\
0 & z^2
\end{bmatrix}\Gamma_{n,\ell+1}^{-1}(z;{\bf s}),\ \ z\in\mathbb{C}\backslash(S\cup\{0\}).
\end{equation*}
A direct inspection using the jumps of $\Gamma_{n,\ell}$ shows that this matrix has no jumps on the contour $S$, however an essential singularity at $z=0$ due to the presence of the exponentials. We can thus compute the contour integral below in two ways. First by evaluation as a residue at infinity; 
\begin{equation}\label{res1comp}
\frac{1}{2\pi\im}\oint_{|z|=R}H_{n,\ell}(z;\bt, {\bf s})\frac {\d z}{z} = \begin{bmatrix}
 1-\big(\Gamma_{n,\ell-1}'(\infty;\bt )\big)_{12} \big(\Gamma_{n,\ell+1}'(\infty;{\bf s})\big)_{21} & \star \\
\star & \star
\end{bmatrix}
\end{equation}
where the $\star$ indicates expressions which are not relevant to the steps below. Secondly we evaluate the left hand side in \eqref{res1comp} as a residue at $z=0$, but we are only interested in the $(11)$-entry,
\begin{equation*}
	\frac{1}{2\pi\im}\oint_{|z|=R}\big(H_{n,\ell}(z;\bt, {\bf s})\big)_{11}\frac {\d z}{z}=\res_{z=0}\frac{1}{z}{\rm e}^{ \vartheta(z;\un t) - \vartheta(z;\un s) }\big(\Gamma_{n,\ell-1}(z;\bt)\big)_{11}\big(\Gamma_{n,\ell+1}(z;{\bf s})\big)_{22}.
\end{equation*}
Hence with \eqref{res1comp} and Proposition \eqref{Satoid},
\begin{equation*}
	\res_{z=0}\frac{1}{z}{\rm e}^{ \vartheta(z;\un t) - \vartheta(z;\un s) }\frac{\Delta_{n,\ell}(\bt-[z])\Delta_{n,\ell}({\bf s}+[z])}{\Delta_{n,\ell-1}(\bt)\Delta_{n,\ell+1}({\bf s})}=
	1-\frac{\Delta_{n+1,\ell-1}(\bt)\Delta_{n-1,\ell+1}({\bf s})}{\Delta_{n,\ell-1}(\bt)\Delta_{n,\ell+1}({\bf s})},
\end{equation*}
or equivalently
\begin{equation}\label{dog}
	\res_{z=0}\frac{1}{z}{\rm e}^{ \vartheta(z;\un t) - \vartheta(z;\un s) }\Delta_{n,\ell}(\bt-[z])\Delta_{n,\ell}({\bf s}+[z])=\Delta_{n,\ell-1}(\bt)\Delta_{n,\ell+1}({\bf s})-\Delta_{n+1,\ell-1}(\bt)\Delta_{n-1,\ell+1}({\bf s}).
\end{equation}
\br Identity \eqref{dog} closely resembles a ``Hirota" version of the classical Dodgson determinantal identity, for if we set $\bt ={\bf s}$ then \eqref{dog} reduces to the Dodgson identity for Hankel determinants,
\begin{equation}\label{Dodgson}
	\Delta_{n,\ell}^2 = \Delta_{n,\ell-1}\, \Delta_{n,\ell+1}- {\Delta_{n+1,\ell-1} \Delta_{n-1,\ell+1}}.
\end{equation}
\er
We now rewrite equation \eqref{dog} with the substitution $\bt \mapsto \bt+ {\bf h},{\bf s} = \bt-{\bf h}$ and define
\begin{align*} \label{MasterHirotaDodgson}
	H\!D_{n,\ell} (\bt,{\bf h}) &= \res_{z=0}\le(\frac {1}{z} {\rm e}^{2 \vartheta(z;{\bf h})} \Delta_{n,\ell}(\bt+{\bf h} -[z]) \Delta_{n, \ell}(\bt-{\bf h}  + [z])\ri)  \\
	& - \Delta_{n,\ell-1}(\bt+{\bf h}) \Delta_{n,\ell+1}(\bt-{\bf h})- \Delta_{n+1,\ell-1}(\bt+{\bf h}) \Delta_{n-1,\ell+1}(\bt-{\bf h})
\end{align*}
so that \eqref{dog} can be written in the compact form
\be
\label{HDnl}
H\!D_{n,\ell} (\bt,{\bf h}) \equiv 0 \ ,\ \ \ \forall\, \bt, {\bf h},\ \ \ \ \ \forall\, n,\ell\in\mathbb{Z}_{\geq 1}.
\ee
For the rest of this section we shall set all even times to zero, i.e. we choose $\bt=\bt_o$. Now use Corollary \ref{corfreak} in conjunction with \eqref{Dodgson},
\begin{equation}\label{freak2}
\Delta_{n,1}^2(\bt_o) = \Delta_{n,0}(\bt_o) \Delta_{n,2}(\bt_o) - \Delta_{n+1,0}(\bt_o) \Delta_{n-1,2}(\bt_o)=2(-1)^n\Delta_{n,0}(\bt_o)\Delta_{n+1,0}(\bt_o),
\end{equation}
and recall Lemma \ref{lemmadouble},
\begin{equation*}
	\Delta_{n,0}(\bt_o)=s_{(n+1)^n}(\bt_o)=2^{-n^2}s_{\delta_n}^2(\wt{\bt}_o).
\end{equation*}
Hence with \eqref{gW} for $t_{2j+1}=0,j>N$ and $t_1=x$,
\begin{equation}\label{idcheck}
	\Delta_{n,0}(\bt_o)=2^{-n^2}{\rm e}^{2g_n(x;\un t)},\ \ \ \ \ \ \ \Delta_{n+1,0}(\bt_o)=2^{-(n+1)^2}{\rm e}^{2W_n(x;\un t)+2g_n(x;\un t)}
\end{equation}
Differentiating \eqref{HDnl} with respect to $h_j$ we can derive a whole hierarchy of equations, however we are only interested in one particular identity:
\bea
\nonumber
\frac{\partial^2}{\partial h_1^2}H\!D_{n,\ell} (\bt_o,{\bf h})\Big|_{{\bf h}={\bf 0}} = -     {\frac {\partial ^2 \Delta_{n+1,\ell-1}  }{\partial t_{{1}}^2}}          \Delta_{{n-1, \ell +1}}      
-   \Delta_{{n+1, \ell -1}}        {\frac {\partial ^2\Delta_{n-1,\ell+1} }{\partial t_{{1}}^2}}     
+2\,   {\frac {\partial   \Delta_{{n+1, \ell -1}} }{\partial t_{{1}}}}          {\frac {\partial   \Delta_{{n-1, \ell +1}} }{\partial t_{{1}}}}       
\\+     {\frac {\partial ^2  \Delta_{n,\ell-1}  }{\partial t_{{1}}^2}}          \Delta_{{n, \ell +1}}      
-2\,   {\frac {\partial   \Delta_{{n, \ell -1}}   }{\partial t_{{1}}}}        {\frac {\partial  \Delta_{{n, \ell +1}} }{\partial t_{{1}}}}        
+   \Delta_{{n, \ell -1}}       {\frac {\partial ^2  \Delta_{n,\ell+1}    }{\partial t_{{1}}^2}}  
+2\,   {\frac {\partial ^{2} \ln \Delta_{n,\ell} }{\partial {t_{{1}}}^{2}}}           ( \Delta_{{n, \ell }})^2 =0  
\label{premiura}
\eea
and the argument of all determinants in the right hand side equals $\bt=\bt_o$. For $\ell=1$, with \eqref{premiura} and \eqref{freak}, \eqref{freak2} this leads to
\begin{equation*}
0= \le(\Delta_{n+1,0} '' \Delta _{n,0}  +\Delta _{n+1,0} \Delta_{n,0}'' - 2 \Delta_{n+1, 0}
' \Delta_{n,0}' \ri)  +\le(\ln \Delta_{n+1,0} +   \ln \Delta_{n,0} \ri)'' \Delta_{n+1,0} \Delta_{n,0},\ \ (')=\frac{\partial}{\partial t_1}
\end{equation*}
which can be rewritten as 
\begin{eqnarray*}
	0&=& \le(\frac{\Delta_{n+1,0} ''}{\Delta_{n+1,0}}   +\frac{\Delta_{n,0}''}{\Delta_{n,0}} - 2\frac{ \Delta_{n+1, 0}'}{\Delta_{n+1,0}}
	\frac{\Delta_{n,0}' }{\Delta_{n,0}}\ri)  +\le(\ln \Delta_{n+1,0} +   \ln \Delta_{n,0} \ri)''\\
	&=&2\frac{\partial^2}{\partial t_1^2}\ln\big(\Delta_{n,0}\Delta_{n+1,0}\big) +\le(\frac{\partial}{\partial t_1}\ln \frac{\Delta_{n,0}}{\Delta_{n+1,0}}\ri)^2,
\end{eqnarray*}
and after simplification with \eqref{idcheck},
\begin{equation*}
	2\,\partial_x^2g_n(x;\un t)=-\partial_x^2W_n(x;\un t)-\big(\partial_xW_n(x;\un t)\big)^2
\end{equation*}
which completes the proof of \eqref{Miura2}.
\section{The outer corners of the regions $P_N$ for $N=2,3$}
\label{outer23}
In this section we offer a proof that the points \eqref{starcorners} belong to the boundary of $P_N$. 
The proof is a verification that the inequalities for the effective potential are fulfilled at the particular values of $a(x)$ determined in \eqref{ss1}. These correspond in the $a$-plane to the points \eqref{starcorners} in the $x$-plane. The proof is a simple deformation argument starting from large $|a|$ (and hence also large $x$).\smallskip

Observing various panes in Figure \ref{Stars} and using the $\Z_{2N+1}$ symmetry of the region, it is sufficient to show that the point 
\be
a_{0}^{[N]}= \frac{1}2 (-1)^N \le(2N\le({2N\atop N}\ri)\ri)^{\frac 1{2N+1}} \ \ \Rightarrow \ \ \ 
x_{0}^{[N]} = (-1)^N\le((2N+1)\left(\frac{2N+1}{2N}\right)^{2N}\binom{2N}{N}\ri)^{\frac 1{2N+1}}
\ee
\begin{wrapfigure}{r}{0.43\textwidth}
\includegraphics[width=0.4\textwidth]{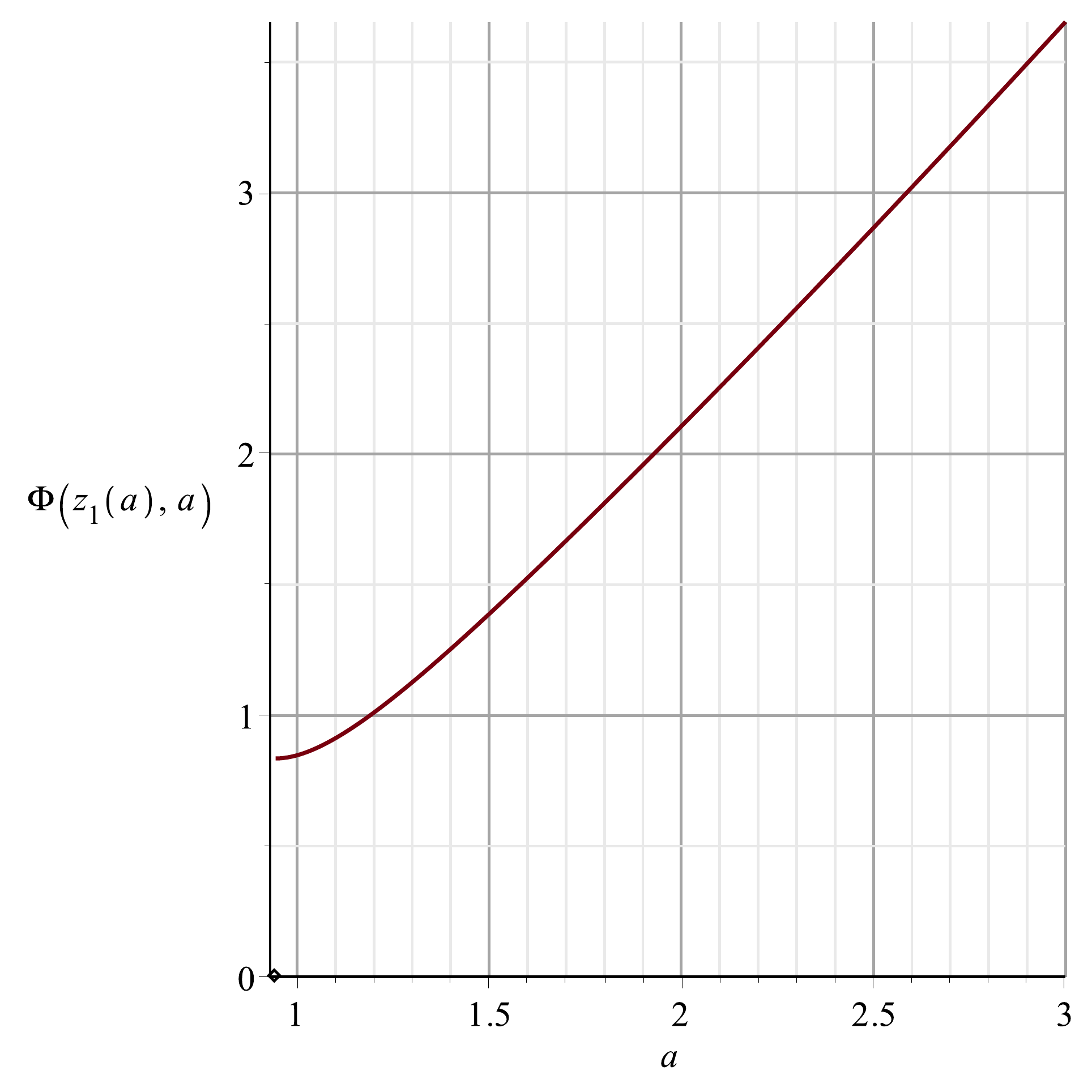}
\caption{The graph of the value of $\Phi$ at the saddle point $z_1$.}
\label{RootDescent}
\end{wrapfigure}
(or rather its $x$--image) belongs to the boundary of $P_N$. 
This point is alternatively positive or negative, depending on the parity of $N$. Consider now in some detail the case $N=2$; then $a_{0}^{[2]} \simeq 0.944$ ($x_{0}^{[2]} \simeq 2.36021$).
In this case the polynomial $P(z;a)$ \eqref{g:3} equals

\begin{minipage}{0.57\textwidth}
\begin{equation*}
	P=P_2(z;a) = z^4 + \frac 1 4 \frac {z^2}{a^3} - \frac 1 {2a}.
\end{equation*}
\end{minipage}\\[2pt]

Let $z_j^\pm (a), \ j=1,\dots 2$ denote the roots of  $P_2$. We know from the argument in Section  \ref{gfunctdelta} that for $|a|$ large the inequalities are fulfilled; as we deform $a$ from larger absolute values to smaller ones,  these inequalities can fail only if the sign of $\Phi(z_j^\pm (a); a)$ changes.\smallskip 

We now simply have to verify that the sign of $\Phi (z_j(a);a)$ remains constant as $a$ decreases from $+\infty$ to the critical value $a_{0}^{[2]}$ (corresponding to $x$ decreasing from $+\infty$ to the rightmost corner $x_{0}^{[2]}$). Since the four roots admit an explicit expression in terms of $a$, this verification is a simple exercise  in calculus. To be more precise, one pair that we denote $z_2^\pm(a)$ is purely imaginary and   lies on the zero level set of $\Phi(z;a)$ identically for $a\in [a_{0}^{[2]}, \infty )$; this is not a cause of concern because it belongs to the level curve (in fact a straight line) joining $z=\pm\im a$ to $z=0$. 
The other pair $z_1^\pm(a)$ is real for $a\in [a_{0}^{[2]}, \infty )$. Then one can easily verify that 
\begin{equation*}
F(a)= \Phi\big(z_1^\pm (a);a\big)
\end{equation*}
is, depending on which of the two member of the pair,  a monotone increasing/decreasing function of $a\in [a_{0}^{[2]}, \infty ) $ and not changing sign. This verification uses Lemma \ref{Lemmader} and the explicit expression for the roots, so that (for $a$ real) 
\be
\label{signphi}
\frac {\d}{\d a} \Phi\big(z_j(a);a\big)  =  - \Re \left({\pa_x}\Phi(z;a) \frac {\d x}{\d a} \frac {\pa_a P(z;a)}{P'(z;a)} \bigg|_{z=z_j(a)}\right)
= 
\Re\left(\frac {\sqrt{z^2 + a^2}}{za}  \frac {\pa_a P(z;a)}{P'(z;a)} \bigg|_{z=z_j(a)}\frac {\d x}{\d a}\right)
\ee
In Figure \ref{RootDescent} we display the graph of $\Phi(z_1(a);a)$ in the range $[a_{0}^{[2]},\infty)$; the monotonicity can be shown by inspecting the sign of \eqref{signphi}; we leave the detail to the reader.
The argument above can be repeated for $N=3$, but for larger $N$ we were not able to find a unifying argument. 
\vspace{-1.2cm}

\end{document}